\documentclass[11pt]{article}

\usepackage[margin=1in]{geometry}
\usepackage{amsmath,amssymb,amsfonts,amsthm}
\usepackage{graphicx}
\usepackage{booktabs}
\usepackage{natbib}
\usepackage[hyphens]{url}
\usepackage{hyperref}
\usepackage{subcaption}
\usepackage{algorithm}
\usepackage{algorithmic}

\newtheorem{proposition}{Proposition}
\newtheorem{theorem}{Theorem}
\newtheorem{observation}{Observation}

\title{Correlated Chance Sampling for Monte Carlo Counterfactual Regret Minimization}
\author{Boning Li\thanks{IIIS, Tsinghua University. Email: li-bn22@mails.tsinghua.edu.cn}
\and Yu Chen\thanks{IIIS, Tsinghua University.}
\and Longbo Huang\thanks{IIIS, Tsinghua University. Email: longbohuang@tsinghua.edu.cn. Corresponding author.}}
\date{\today}

\begin{document}

\maketitle

\begin{abstract}
Monte Carlo Counterfactual Regret Minimization (MCCFR) repeatedly allocates chance outcomes while its strategy evolves, yet standard sampling draws those outcomes independently on every visit. We introduce Correlated Chance Sampling MCCFR (CCS-MCCFR), a drop-in replacement that assigns each concrete chance node a persistent randomized Weyl stream and maps its phases through the node's chance distribution. Each fixed-index draw has the correct marginal law, while the first $N$ draws consumed during $N$ visits to one concrete node achieve deterministic local frequency error $O(\!\log(N+1)/N)$, compared with the $O(N^{-1/2})$ expected scale of i.i.d. frequencies. We further establish unbiasedness along fixed strategy trajectories, isolate adaptive phase selection through a conditional scalar bound, and show that a per-traversal reset variant retains the standard $O(1/\sqrt{T})$ External Sampling guarantee. In paired experiments, CCS-MCCFR reduces final exploitability by 19.05\% to 34.01\% across Kuhn poker and four Leduc poker configurations, with every paired-bootstrap confidence interval above zero, and by a significant 4.27\% on Goofspiel-4. The gain survives to 3M Leduc node touches and combines with Linear CFR to reach the lowest measured exploitability. The sampler introduces no new hyperparameters and no measurable time overhead, so CCS-MCCFR turns a one-line change to the chance sampler into explicit local guarantees and large exploitability reductions across tabular poker.
\end{abstract}

\section{Introduction}

Imperfect-information games such as poker are central benchmarks for sequential decision making under uncertainty. Counterfactual Regret Minimization (CFR) \cite{zinkevich2007regret} is a foundational algorithm for computing approximate Nash equilibria in these games, and CFR-based equilibrium computation and real-time solving contributed to expert or superhuman performance in heads-up no-limit poker through DeepStack and Libratus, and later to superhuman multiplayer poker through Pluribus \citep{moravcik2017deepstack,brown2018superhuman,brown2019superhuman}; the same regret-minimization core also underpins unified perfect- and imperfect-information learning \citep{schmid2023student} and remains the reference against which recent language-model poker agents are measured \citep{li2026pokerskill}. Monte Carlo Counterfactual Regret Minimization (MCCFR) \cite{lanctot2009monte} replaces full-tree updates with sampled traversals. The quality of those sampled updates depends strongly on sampling variance, including variation from card deals and other chance events.

Existing variance-reduction techniques for MCCFR primarily modify sampled estimators. VR-MCCFR \cite{schmid2019variance} introduces control-variate baselines for sampled counterfactual values. Beyond these MCCFR-specific methods, ESCHER \cite{mcaleer2023escher} is a model-free deep CFR-style algorithm that replaces rollout-based importance-weighted regret targets with a learned history-value estimator and a fixed update-player sampling policy. Neither line directly balances how outcomes at the same concrete chance node are allocated across repeated visits. We study this temporal structure and its interaction with adaptive MCCFR updates.

The basic idea is simple. Independent sampling can repeatedly select some outcomes and temporarily miss others when MCCFR revisits the same concrete chance node. A persistent randomized Weyl stream spreads those visits more evenly across the node's chance distribution \citep{niederreiter1992random}. When downstream values are fixed or sufficiently stable, this local balance can reduce accumulated chance error; under adaptive regret updates, the effect additionally depends on how sampled outcomes couple to evolving downstream values. This motivates the revisit, symmetry, and private-information diagnostics used below.

We propose \textbf{Correlated Chance Sampling MCCFR} (CCS-MCCFR), a drop-in modification that places one persistent randomized low-discrepancy stream at each concrete chance node \citep{niederreiter1992random}. Random shifting preserves fixed-index marginal correctness while retaining low discrepancy \citep{cranley1976randomization,owen1997monte}. The sampler leaves the chance law, External-Sampling estimator, and regret update formula unchanged, adds no hyperparameters, and costs no measurable solver time (Appendix~\ref{app:extra}). It is the cheapest intervention we are aware of that yields exploitability reductions of this size in tabular poker.

The strongest results occur in tabular poker. CCS-MCCFR lowers final exploitability by 27.64\% on Kuhn, 24.59\% on standard Leduc, and 27.65\%, 34.01\%, and 19.05\% in a controlled 6/10/12-card Leduc deck expansion, with all corresponding paired-bootstrap confidence intervals above zero. On Leduc the 24.5\% reduction persists at 3M node touches, and Linear CFR plus CCS-MCCFR is the lowest cell in a six-configuration composition grid, 42.59\% below vanilla updates with i.i.d.\ chance. Goofspiel-4 yields a smaller but significant 4.27\% reduction. Liar's Dice, reduced Flop Hold'em, and four Libratus Turn and River endgames place the endpoint near vanilla, with every paired interval crossing zero. Revisit exposure, symmetry, and private-information coupling organize this pattern as empirical moderators and match the theoretical role of evolving downstream values; Appendix~\ref{app:scope} states the scope of these claims.

\textbf{Contributions:}
\begin{itemize}
\item We introduce CCS-MCCFR, the first persistent randomized low-discrepancy chance sampler for MCCFR regret estimation, and demonstrate significant exploitability reductions on Kuhn, standard Leduc, and a controlled Leduc deck family, including reductions from 19.05\% to 34.01\% across the three deck configurations. The sampler is a drop-in one-line change with no new hyperparameters.
\item We establish fixed-index marginal correctness, unbiasedness along fixed strategy trajectories, and deterministic $O(\!\log(N+1)/N)$ local frequency discrepancy for the first $N$ draws consumed by one concrete node. We also bound the adaptive conditional scalar phase-selection term and prove that per-traversal resetting retains the standard $O(1/\sqrt{T})$ External Sampling guarantee.
\item We map the sampler's empirical operating regime through a persistence ablation, revisit and symmetry diagnostics, controlled private-information configurations, update-rule composition, and boundary tests spanning Goofspiel, Liar's Dice, reduced Flop Hold'em, and HUNL endgames.
\end{itemize}

\section{Related Work}
\label{sec:related}

\textbf{MCCFR and sampling schemes.} Monte Carlo CFR replaces exact counterfactual regret increments with sampled estimates while preserving the no-regret structure of CFR \citep{lanctot2009monte}. Outcome Sampling samples a single trajectory and uses importance corrections. External Sampling samples opponent actions and chance outcomes while enumerating the traversing player's actions, and is the external sampling regime used throughout this paper. Generalized sampling and probing analyze how different sampling blocks affect variance \citep{gibson2012generalized}, average strategy sampling changes which information is collected for the average strategy \citep{burch2012efficient}, and public chance sampling shares sampled public chance events across compatible information sets \citep{johanson2012efficient}. More recent work continues along this axis: information-set sampling improves the training efficiency of neural imperfect-information agents \citep{bertram2024efficiently}, and counterfactual-value-based fictitious play accelerates equilibrium convergence in Monte Carlo settings \citep{qi2024accelerating}. These methods choose what is sampled and how the sampled estimator is weighted. CCS-MCCFR keeps the chance distribution and the external sampling estimator fixed, but changes the temporal structure of the chance draws by attaching a persistent low discrepancy stream to each concrete chance node across iterations.

\textbf{Variance reduction and regret update accelerators.} Control variates, baselines, and common random numbers form the standard toolkit for reducing the variance of Monte Carlo estimates \citep{mohamed2020monte}. VR-MCCFR applies control-variate state-action baselines, learned online in its main instantiation, to sampled counterfactual values \citep{schmid2019variance}, and low- and zero-variance baselines identify baseline choices for extensive-form games \citep{davis2020low}. Variance also receives attention on the optimization side, where stochastic gradient estimates for equilibrium approximation are stabilized directly \citep{meng2025reducing}. Baselines also appear in policy-gradient form for partially observable multiagent settings \citep{srinivasan2018actor}, and regret targets can be estimated by function approximation \citep{waugh2015solving}. A related deep-CFR line \citep{brown2019deep,steinberger2020dream,brown2020combining} changes the regret target more substantially: ESCHER is a model-free deep CFR-style algorithm that combines a learned history-value estimator with a fixed update-player sampling policy, removing the OS-MCCFR-derived reach and continuation importance corrections from its own estimator \citep{mcaleer2023escher}. Our intervention instead leaves the External-Sampling estimator algebra unchanged and alters the temporal allocation of repeated chance outcomes at each concrete node. Thus VR-MCCFR and CCS-MCCFR act on distinct algorithmic axes: the former corrects sampled value estimates, whereas the latter changes the cross-visit schedule of chance outcomes while preserving the chance law. Because they act on different axes, they can be applied together, and we measure that composition rather than assume it. Another line changes regret updates or averaging, including CFR+ \citep{tammelin2014solving} and Discounted CFR with its Linear CFR special case \citep{brown2019solving}. This line remains active: predictive Blackwell approachability connects regret matching to mirror descent \citep{farina2021faster}, discount factors are made dynamic \citep{xu2024dynamic} or scheduled over the run \citep{zhang2026faster}, step sizes are made asymmetric across players \citep{meng2026faster}, regret matching is made parameter-free \citep{meng2026parameterfree}, and the discounted updates carry over to deep CFR \citep{xu2026deep}. These are modular implementation axes, but an update rule changes the downstream values paired with the chance stream. Section~\ref{sec:orthogonality} therefore measures their empirical composition directly, and Appendix~\ref{app:es-control-variate} reports a restricted External-Sampling control-variate composition test motivated by VR-MCCFR.

\textbf{Low-discrepancy sequences and randomized QMC.} Low-discrepancy sequences are classical tools for reducing integration error \citep{niederreiter1992random}, with Sobol and Halton sequences as standard examples \citep{sobol1967distribution,halton1960efficiency}. Random shifts and scrambling are standard randomized QMC mechanisms for preserving marginal correctness while exploiting low discrepancy \citep{cranley1976randomization,owen1997monte}. Systematic sampling similarly balances coverage in classical survey and simulation settings \citep{madow1944theory}. In machine learning these constructions replace i.i.d.\ draws inside kernel feature maps \citep{avron2016quasi} and inside the stochastic gradient estimates of variational inference \citep{buchholz2018quasi}, the latter being the closest precedent for embedding a low-discrepancy stream in an adaptive iterative procedure. CCS-MCCFR uses the simplest one dimensional randomized construction, a shifted Weyl rotation, and binds one stream to each concrete chance node. The primitive is classical; the placement inside adaptive repeated MCCFR traversal is the point that creates both the benefit and the phase selection issue analyzed in Section~\ref{sec:method}.

\textbf{QMC and correlated sampling in games and learning.} \citet{lisy2015online} applied QMC ideas to online Monte Carlo search in imperfect-information games, making it the closest QMC precedent in this area, and that search setting has since been developed through continual resolving \citep{sustr2019monte}, its soundness analysis \citep{sustr2020sound}, and the formal models that underpin both \citep{kovarik2022rethinking}. Their target is online sampling at search time, whereas our target is repeated tabular MCCFR regret estimation with persistent per-chance-node streams. Common random numbers in reinforcement learning \citep{peshkin2002learning} and antithetic sampling in random search \citep{mania2018simple} also use correlation to reduce estimator noise, but they do not address chance sampling inside extensive form game regret minimization. We contribute a new placement of randomized low discrepancy streams within MCCFR, an explicit characterization of adaptive phase selection, and paired experiments that identify both favorable configurations and empirical regime boundaries.

\textbf{Complementary axes of speedup.} Faster game solving is also pursued by shrinking the tree that must be traversed and by traversing it faster. Action abstraction can be selected by reinforcement learning \citep{li2024rl}, online pruning and abstraction can be applied during solving \citep{li2025efficient}, and information abstraction can be constructed to transfer across games \citep{li2026effective}. On the implementation side, CFR is parallelized for real-time use \citep{li2026parallel} and mapped onto accelerators for specific games \citep{baghal2025pasur}, while meta-learning tunes the self-play regret minimizer itself \citep{sychrovsky2025meta}. Each of these axes decides which nodes are visited or how quickly a visit executes, whereas CCS-MCCFR changes the outcome assigned to a chance node across visits, so it composes with them rather than competing.

\section{Background}

\paragraph{Extensive Form Games and CFR.}

An \textit{extensive-form game} (EFG) is defined by a game tree with decision nodes controlled by players, chance nodes governed by fixed probability distributions, and terminal nodes with payoffs. At decision nodes, player $i$ observes an \textit{information set} $I \in \mathcal{I}_i$ and selects an action $a \in A(I)$. A behavioral strategy $\sigma_i$ assigns a probability distribution over actions at each information set.

Let $\pi^\sigma(h)$ be the reach probability of history $h$ under strategy profile $\sigma$. We write $\pi_i^\sigma(h)$ for player $i$'s contribution to this reach probability and $\pi_{-i}^\sigma(h)$ for the product of chance and all players other than $i$. With this convention, the counterfactual value of action $a$ at information set $I$ is
\[
v_i^\sigma(I,a)=\sum_{h\in I}\pi_{-i}^\sigma(h)\sum_{z\sqsupset ha}\pi^\sigma(ha,z)u_i(z),
\]
where $u_i(z)$ is player $i$'s terminal utility and $\pi^\sigma(ha,z)$ is the continuation probability from history $ha$ to terminal history $z$. The value of the current strategy at $I$ is $v_i^\sigma(I,\sigma_i)=\sum_{a\in A(I)}\sigma_i(I,a)v_i^\sigma(I,a)$.

The instantaneous regret for action $a$ at iteration $t$ is
\[
r^t(I,a)=v_i^{\sigma^t}(I,a)-v_i^{\sigma^t}(I,\sigma_i^t).
\]
CFR accumulates $R^t(I,a)=\sum_{\tau=1}^t r^\tau(I,a)$ and updates strategies by regret matching, $\sigma_i^{t+1}(I,a)\propto \max(R^t(I,a),0)$. The output is the reach weighted average strategy, defined at each information set by
\[
\bar\sigma_i^T(I,a)=
\frac{\sum_{t=1}^T \pi_i^{\sigma^t}(I)\sigma_i^t(I,a)}
{\sum_{t=1}^T \pi_i^{\sigma^t}(I)},
\]
with an arbitrary convention when the denominator is zero. In two player zero sum games, vanishing average external regret implies convergence of this average strategy to a Nash equilibrium at the standard $O(T^{-1/2})$ regret rate \citep{zinkevich2007regret}.

\paragraph{Monte Carlo CFR}
\label{sec:mccfr-background}

Computing exact counterfactual values requires traversing the full game tree. MCCFR replaces exact regret increments with sampled estimates $\tilde r^t(I,a)$ that are unbiased under the chosen sampling scheme \citep{lanctot2009monte}. We use External Sampling. For the traversing player, all legal actions at each reached information set are evaluated; for opponents and chance, one branch is sampled according to the opponent strategy or the fixed chance distribution. In vanilla External Sampling, each visit to a concrete chance node draws independently from the game's chance law $f_c$, such as the uniform distribution over legal card deals. CCS-MCCFR changes only this final draw rule and leaves the external sampling regret estimator unchanged.

\section{The CCS-MCCFR Sampler}
\label{sec:method}

\paragraph{Motivation}

Consider a chance node $c$ with outcome distribution $f_c$ (e.g., dealing a flop card uniformly from 48 remaining cards). In vanilla MCCFR, each visit to $c$ samples an outcome $o \sim f_c$ independently. Let $N$ denote the number of draws consumed by this one concrete node's stream, equivalently its first $N$ visits; $N$ is distinct from the global iteration horizon $T$ and from the experimental node-touch budget. Over these $N$ visits, the empirical frequency $\hat{p}_k^{(N)}$ of outcome $k$ has expected error $\mathbb{E}|\hat{p}_k^{(N)} - p_k| = \Theta(N^{-1/2})$ under i.i.d. sampling. When downstream per-outcome values are fixed or effectively oblivious to the sampled phase, this frequency error propagates linearly into the corresponding chance-node value estimates. The fully adaptive case, where those values themselves evolve with the sampled outcomes, is analyzed separately below.

\textbf{Local frequency-balancing idea:} Low discrepancy phases spread the unweighted outcomes of a repeatedly visited node more evenly over time than i.i.d. draws. The experiments test when this improved temporal allocation combines with evolving downstream values to reduce exploitability.

\paragraph{CCS-MCCFR via Persistent Weyl Sequences}

For each concrete chance node $c$, we associate a persistent Weyl sequence. Let $N_c$ be the number of previous visits to $c$. On the next visit, the sampler uses
\[
u_{c,N_c}=(\phi_c+N_c g)\bmod 1,
\]
where $g=(\sqrt{5}-1)/2$ and $\phi_c\sim\mathrm{Uniform}[0,1)$ is drawn once at initialization. It then maps $u_{c,N_c}$ to an outcome through the quantile function $\Phi_c^{-1}$ of the chance distribution $f_c$, returns $o=\Phi_c^{-1}(u_{c,N_c})$, and increments $N_c$ by one. For a discrete distribution $f_c=(p_1,\ldots,p_m)$, the quantile rule selects outcome $k$ when $\sum_{j<k}p_j\le u_{c,N_c}<\sum_{j\le k}p_j$.

\textbf{Why the golden ratio?} The golden ratio has especially poor rational approximations, with continued fraction $[0;1,1,1,\ldots]$, giving favorable one-dimensional discrepancy constants for Kronecker or Weyl rotations. This yields small deviations from uniform coverage over intervals and leads to the frequency error bounds below \citep{niederreiter1992random}.

\paragraph{Marginal Correctness and Bias Analysis}

\begin{proposition}[Marginal correctness]
\label{prop:unbiased}
For any fixed $n\ge 0$, the marginal distribution of $o_{c,n}=\Phi_c^{-1}(u_{c,n})$ is $f_c$.
\end{proposition}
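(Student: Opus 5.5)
The plan is to reduce the statement to the single fact that a uniform random variable on $[0,1)$, rotated by a fixed amount modulo one, is again uniform; the quantile rule then carries uniformity onto $f_c$. Fix $n\ge 0$ and put $\theta=ng \bmod 1$, which is a deterministic constant in $[0,1)$. Then $u_{c,n}=(\phi_c+\theta)\bmod 1$, where $\phi_c\sim\mathrm{Uniform}[0,1)$ is drawn once at initialization. Its distribution is fixed before any play, so it is independent of the index $n$ and of how the sampler is later used.

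\textbf{Step 1 (uniformity of the shifted phase).} I will show that $u_{c,n}\sim\mathrm{Uniform}[0,1)$. Take $0\le a<b\le 1$. The event $u_{c,n}\in[a,b)$ equals the event that $\phi_c$ lies in the preimage of $[a,b)$ under the map $x\mapsto (x+\theta)\bmod 1$. This map is a translation on the circle $\mathbb{R}/\mathbb{Z}$, so the preimage $[a-\theta,b-\theta)\bmod 1$ is either a single interval or a union of two intervals. In both cases its total Lebesgue measure is $b-a$. Hence $\Pr(u_{c,n}\in[a,b))=b-a$. Since such intervals generate the Borel sets, this settles uniformity. Equivalently, Lebesgue measure on the circle is invariant under rotation.

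\textbf{Step 2 (pushforward through the quantile).} Write $F_k=\sum_{j\le k}p_j$ and $F_0=0$. By the stated discrete quantile rule, $o_{c,n}=k$ exactly when $u_{c,n}\in[F_{k-1},F_k)$. Step 1 then gives $\Pr(o_{c,n}=k)=F_k-F_{k-1}=p_k$. For a general distribution, I would instead cite the standard inverse-transform fact: if $U$ is uniform and $\Phi_c^{-1}$ is the generalized inverse, then $\Phi_c^{-1}(U)\sim f_c$.

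\textbf{Main obstacle.} The only delicate point is making clear what the claim says and what it does not. It is a statement about the marginal law for a fixed index $n$, with $\phi_c$ as the only source of randomness. It does not concern the distribution at a random or adaptively chosen visit count, and it does not claim that different indices are independent; they are in fact perfectly dependent through $\phi_c$. I would therefore stress that $n$ is non-random here, so that $\theta$ is a constant and Step 1 applies. The issue of adaptive selection belongs to the later phase-selection analysis. Boundary conventions, such as $u=1$ being impossible and outcomes with $p_k=0$ mapping to empty intervals, are measure-zero matters and need only a remark.
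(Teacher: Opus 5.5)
Your proposal is correct and follows the paper's proof: rotation invariance of Lebesgue measure on the circle makes $u_{c,n}$ uniform for each fixed $n$, and the threshold quantile rule then gives $\Pr[o_{c,n}=k]=p_k$. The paper also notes that, by conditioning, the conclusion extends to any random index that is independent of $\phi_c$; this is the form used in Proposition~\ref{prop:traj}, and it refines your remark that the claim says nothing about random visit counts, since only phase-dependent indices actually fail.
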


\begin{proof}
Since $\phi_c \sim \mathrm{Uniform}[0,1)$ and addition by a constant modulo one is measure-preserving on the circle, $u_{c,n}=(\phi_c+ng)\bmod 1$ is uniform for any fixed $n\ge0$. The quantile map $\Phi_c^{-1}(u_{c,n})$ then has distribution $f_c$: for the discrete threshold rule above, $\Pr[o_{c,n}=k]=\Pr[\sum_{j<k}p_j\le u_{c,n}<\sum_{j\le k}p_j]=p_k$. The same conclusion holds for any random index $N$ that is independent of $\phi_c$, by conditioning on $N$. The converse fails, since a phase dependent index can render $u_{c,N}$ non uniform, which is exactly the adaptive bias analyzed below.
\end{proof}

\textbf{Adaptive tree traversal.} Proposition~\ref{prop:unbiased} establishes marginal correctness for a fixed visit index. We next separate the fixed-trajectory case from the fully adaptive case. A concrete chance node $c$ is a fixed tree location with its own private phase stream $\phi_c$, and $N_c(t)$ denotes the number of previous visits to that exact node before the current draw at iteration $t$.

\begin{proposition}[Trajectory unbiasedness]
\label{prop:traj}
Along any fixed, data independent strategy sequence $\sigma^{1:T}$, the CCS-MCCFR regret estimates satisfy $\mathbb{E}[\tilde{r}^t(I,a) \mid \sigma^{1:t}] = r^t_{\sigma^t}(I,a)$ for all $t$.
\end{proposition}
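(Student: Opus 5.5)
The plan is to reduce Proposition~\ref{prop:traj} to the standard External Sampling unbiasedness argument of \citet{lanctot2009monte}, using Proposition~\ref{prop:unbiased} to supply the correct chance marginals. The crucial point is that once $\sigma^{1:T}$ is fixed and data independent, no visit index can depend on any phase $\phi_c$ through the strategy. I will first make precise what is random at iteration $t$. It consists of the collection of phases $\{\phi_c\}$ over concrete chance nodes and the opponents' sampled actions, which are drawn independently of the phases.

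\textbf{Step 1: visit indices.} I will show that for each concrete chance node $c$, the count $N_c(t)$ is independent of $\phi_c$. Under a fixed strategy sequence, whether $c$ is reached in iterations $1,\dots,t-1$ depends on three things: opponent action draws, the traversing-player schedule, and the draws at \emph{ancestor} chance nodes $c'\neq c$. Each of these is independent of $\phi_c$, because phases are drawn independently per node and a node's own draw cannot influence whether that same node is reached. So $N_c(t)$ is measurable with respect to randomness independent of $\phi_c$.

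\textbf{Step 2: conditional marginal at each node.} By the remark in the proof of Proposition~\ref{prop:unbiased} (random index independent of $\phi_c$), $u_{c,N_c(t)}$ is uniform. Hence the outcome drawn at $c$ in iteration $t$ has law $f_c$. I need this conditionally on the path that reached $c$ in iteration $t$. The same path is a function of ancestor phases, opponent draws, and $N_{c'}(t)$ for ancestors $c'$, all independent of $\phi_c$. The conditional law of the draw at $c$, given the current-iteration history up to $c$, is therefore $f_c$.

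\textbf{Step 3: tower property.} The External Sampling estimator $\tilde r^t(I,a)$ is a sum over sampled terminal histories $z$ of $u_i(z)$, weighted by the traverser's continuation factors. I will expand its expectation by conditioning sequentially along the sampled path. At an opponent node the sampled action has law $\sigma^t_{-i}$, exactly as in vanilla External Sampling. At a chance node the outcome has law $f_c$ by Step~2. Iterating the tower property reproduces the sampling probability $q(z)=\pi^{\sigma^t}_{-i}(z)$ along each history. This recovers $v_i^{\sigma^t}(I,a)$ and $v_i^{\sigma^t}(I,\sigma^t_i)$ by linearity, giving $r^t_{\sigma^t}(I,a)$.

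\textbf{Main obstacle.} The hard part is the joint-law issue in Step~3. The chance draws within one iteration at different nodes are not jointly i.i.d. across iterations, because the same node's draws are correlated over time. Within a single iteration, however, the draws at distinct nodes use distinct independent phases, and each node is visited at most once per traversal path. I therefore expect to argue that, conditional on the past iterations' data restricted to other nodes, each node's current draw is marginally $f_c$. I also need to check that the estimator for iteration $t$ only requires these one-step conditional marginals and never the joint law across iterations. That holds since $\tilde r^t$ is a linear function of indicator products along a single path. If conditioning on the full past (which reveals information about $\phi_c$ through $c$'s earlier outcomes) breaks uniformity, I will instead take expectations without conditioning on past chance outcomes. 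The statement only conditions on $\sigma^{1:t}$, which is deterministic here, so the unconditional marginal suffices.
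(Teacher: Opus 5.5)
Your proposal is correct and follows the paper's proof: $N_c(t)$ is independent of $\phi_c$ because reach depends only on randomness outside $c$'s own stream, Proposition~\ref{prop:unbiased} then applies with an index independent of $\phi_c$, and independent phases across nodes give the same within-traversal law as independent External Sampling. Your Step~3 only spells out the tower-property expansion that the paper compresses into ``standard unbiasedness carries over.'' One caution on your obstacle paragraph: conditioning on past data at ``other nodes'' is safe only for non-descendants of $c$, because past visit patterns at descendants of $c$ reveal earlier outcomes at $c$ and hence information about $\phi_c$; your Step~2 conditioning on ancestor randomness, or the unconditional fallback, is the right formulation.
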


\begin{proof}
Fix the strategy sequence $\sigma^{1:T}$. In an acyclic extensive-form tree, a concrete chance node is reached at most once during a single traversal. Whether traversal reaches $c$ at iteration $t$ is determined by the fixed strategies and by sampled actions and chance outcomes at ancestor or other concrete nodes. It is not determined by the outcome sampled at $c$, because that outcome is observed only after $c$ has been reached. Under the fixed strategy sequence, earlier outcomes sampled at $c$ also do not change later strategies or later reach decisions. Hence the prior visit count $N_c(t)$ is independent of the phase $\phi_c$ that determines the current marginal draw. Conditional on $N_c(t)=n$, Proposition~\ref{prop:unbiased} gives $u_{c,n}\sim\mathrm{Uniform}[0,1)$ and $\Phi_c^{-1}(u_{c,n})\sim f_c$. Distinct concrete chance nodes have independent phase shifts, so the within traversal joint chance law is the same product law as in independent External Sampling. The standard unbiasedness of the External Sampling MCCFR estimator therefore carries over, giving $\mathbb{E}[\tilde{r}^t(I,a)\mid \sigma^{1:t}]=r^t_{\sigma^t}(I,a)$.
\end{proof}

Adaptive MCCFR introduces an endogenous feedback loop. The realized strategy $\sigma^t$ depends on regrets accumulated from past outcomes drawn at $c$, so $N_c(t)$ can covary with the persistent phase $\phi_c$ through the strategy history. The next proposition isolates this phase-selection channel for a bounded scalar contribution.

\textbf{A conditional scalar phase-selection term.} Fix a traversal at iteration $t$ and let $R_{c,t}$ denote the event that this traversal reaches, and hence consumes, node $c$; on this event the draw at $c$ consumes phase $u_{c,N_c(t)}=(\phi_c+N_c(t)g)\bmod 1$. All quantities below are defined on $R_{c,t}$ and all expectations are conditional on it; note that $R_{c,t}$ itself may covary with $\phi_c$ through the strategy history, and this reach selection is part of what the conditional law of the consumed phase absorbs. Let $O_c$ be the outcome set at $c$. After further conditioning on all randomness outside the selected phase at $c$, write $g_t:O_c\to[-G,G]$ for the bounded downstream scalar contribution to the regret or value term under consideration when the current outcome is set to $o\in O_c$. In adaptive MCCFR, $g_t$ is a random function and may depend on $\phi_c$ through previous outcomes and regrets. Writing $\bar g_t=\mathbb{E}_{o\sim f_c}[g_t(o)]$ for the true chance mean of this random function, define
\[
B_c(t):=\mathbb{E}\!\left[g_t\!\left(\Phi_c^{-1}(u_{c,N_c(t)})\right)\,\middle|\,R_{c,t}\right]-\mathbb{E}[\bar g_t\mid R_{c,t}],
\]
where the expectations also average over the phase at $c$ and independent sampling elsewhere. The bias can arise because the selected phase may deviate from uniform after conditioning on reaching $c$ and on $g_t$, and because $g_t$ itself can covary with the selected phase. It vanishes along a fixed strategy trajectory, where Proposition~\ref{prop:traj} applies.

\begin{proposition}[Conditional scalar phase-selection bound]
\label{prop:biasbound}
Assume the downstream scalar contribution satisfies $\|g_t\|_\infty\le G$, and let
\[
\delta_{c,t}:=\mathbb{E}_{g_t}\!\Big[\,\mathrm{TV}\!\big(\mathrm{Law}(u_{c,N_c(t)}\mid g_t,R_{c,t}),\,\mathrm{Unif}[0,1)\big)\,\Big|\,R_{c,t}\Big]
\]
be the expected \emph{conditional} total variation between the law of the selected phase given $g_t$ (on the reach event $R_{c,t}$) and the uniform law. Then
\[
|B_c(t)|\le 2G\,\delta_{c,t} .
\]
\end{proposition}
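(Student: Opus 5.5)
The plan is to condition on the random function $g_t$ and the reach event $R_{c,t}$, and to compare the conditional expectation of $g_t$ evaluated at the consumed phase against its value under a uniform phase. Write $U:=u_{c,N_c(t)}$ and $h_t(u):=g_t(\Phi_c^{-1}(u))$ for $u\in[0,1)$. Then $h_t$ is a measurable function on $[0,1)$ with $\|h_t\|_\infty\le G$. The key identity is that under a uniform phase the quantile map pushes $\mathrm{Unif}[0,1)$ forward to $f_c$. This is exactly the content of the proof of Proposition~\ref{prop:unbiased}, and it gives
\[
\int_0^1 h_t(u)\,du=\mathbb{E}_{o\sim f_c}[g_t(o)]=\bar g_t .
\]
Since $\bar g_t$ is a deterministic functional of $g_t$, this holds pointwise for each realization of $g_t$.

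Next I would use the tower property, conditioning first on $g_t$ (and $R_{c,t}$):
\[
B_c(t)=\mathbb{E}\!\left[\,\mathbb{E}[h_t(U)\mid g_t,R_{c,t}]-\int_0^1 h_t(u)\,du\;\middle|\;R_{c,t}\right].
\]
For fixed $g_t$, the inner difference is $\int h_t\,d\mu-\int h_t\,d\lambda$. Here $\mu=\mathrm{Law}(U\mid g_t,R_{c,t})$ and $\lambda$ is Lebesgue measure on $[0,1)$. The standard duality bound for total variation, with $\mathrm{TV}(\mu,\lambda)=\sup_A|\mu(A)-\lambda(A)|$, gives
\[
\Big|\int h\,d\mu-\int h\,d\lambda\Big|\le \big(\sup h-\inf h\big)\,\mathrm{TV}(\mu,\lambda)\le 2G\,\mathrm{TV}(\mu,\lambda).
\]
To see this, write the difference as $\int (h-\tfrac{\sup h+\inf h}{2})\,d(\mu-\lambda)$ and bound it by the oscillation times half the total-variation norm $\|\mu-\lambda\|_1/2=\mathrm{TV}$. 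With the weaker bound $\|h\|_\infty\|\mu-\lambda\|_1$, the constant would instead be $2G$ times the TV distance, and that already suffices.

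Finally, I would take absolute values inside the outer expectation via Jensen's inequality and bound the result by $2G\,\mathbb{E}[\mathrm{TV}(\cdot)\mid R_{c,t}]=2G\,\delta_{c,t}$.

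The main obstacle is measure-theoretic rather than analytic. One needs a regular conditional law of $U$ given the random function $g_t$ on $R_{c,t}$, and measurability of $g_t\mapsto\mathrm{TV}(\cdot)$ so that $\delta_{c,t}$ is well defined. Since $O_c$ is finite, $g_t$ is just a random vector in $[-G,G]^{|O_c|}$ and $U$ is real-valued, so regular conditional distributions exist by standard results. I would state this briefly rather than belabor it. One should also note that the argument never needs $g_t$ and $U$ to be independent; all of their dependence is absorbed into the conditional law $\mu$. This is exactly why the bound uses the conditional TV rather than the marginal one.
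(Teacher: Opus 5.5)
Your proposal is correct and follows the paper's proof essentially step for step. Like the paper, you condition on $g_t$ and $R_{c,t}$, use the quantile pushforward to identify $\bar g_t$ with the uniform-phase expectation of $g_t\circ\Phi_c^{-1}$, apply the bounded-function total-variation inequality, and average back over $g_t$ with the triangle (Jensen) inequality; your oscillation refinement $(\sup h-\inf h)\,\mathrm{TV}$ and your remark on regular conditional laws (valid since $g_t$ is a random vector in $[-G,G]^{|O_c|}$) are harmless additions that the paper leaves implicit.
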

Here $G$ is a uniform bound on the scalar term being analyzed. For a value term under payoffs in an interval of width $2\Delta$, one may take $G=\Delta$ after centering; for a regret difference term, one may take the corresponding regret increment bound. Total variation is normalized as $\mathrm{TV}(P,Q):=\sup_{A}|P(A)-Q(A)|$, under which $|\mathbb{E}_P[\psi]-\mathbb{E}_Q[\psi]|\le 2\|\psi\|_\infty\,\mathrm{TV}(P,Q)$ for bounded $\psi$.

\begin{proof}
Work on the event $R_{c,t}$; all conditioning below includes it. Condition further on $g_t$ (a bounded, $\sigma^t$-measurable function). Given $g_t=\hat g$, set $\psi:=\hat g\circ\Phi_c^{-1}$, so $\|\psi\|_\infty\le G$; the true mean $\bar g_t=\sum_o f_c(o)\hat g(o)$ is a constant equal to $\mathbb{E}_{U\sim\mathrm{Unif}}[\psi(U)]$ by the quantile pushforward of Proposition~\ref{prop:unbiased} (an \emph{independent} uniform phase maps to law $f_c$). With $P_{\hat g}:=\mathrm{Law}(u_{c,N_c(t)}\mid g_t=\hat g,\,R_{c,t})$,
\[
\big|\,\mathbb{E}[g_t(\Phi_c^{-1}(u_{c,N_c(t)}))-\bar g_t\mid g_t=\hat g,\,R_{c,t}]\,\big|
=\big|\mathbb{E}_{P_{\hat g}}[\psi]-\mathbb{E}_{\mathrm{Unif}}[\psi]\big|
\le 2\|\psi\|_\infty\,\mathrm{TV}(P_{\hat g},\mathrm{Unif}).
\]
Taking expectation over $g_t$ given $R_{c,t}$ and applying the triangle inequality (tower property) gives $|B_c(t)|\le 2G\,\delta_{c,t}$. Conditioning on $g_t$ and on $R_{c,t}$ absorbs both the coupling between $g_t$ and the phase and the reach selection, so no independence between $g_t$ and $\phi_c$ is assumed.
\end{proof}

\textbf{What the bound controls.} Proposition~\ref{prop:biasbound} pins the entire adaptive effect at one node and time to a single conditional joint-process quantity, $\delta_{c,t}$, which measures how far the consumed phase departs from uniform once the downstream contribution and the reach event are known. Proposition~\ref{prop:traj} sets this quantity to zero along fixed trajectories, and Theorem~\ref{thm:reset} gives a per-traversal reset construction that retains the standard External Sampling guarantee. Appendix~\ref{app:adaptive-frequency} measures the adaptive frequency behavior and large-budget endpoints directly, and Appendix~\ref{app:scope} states the reach of the inequality.

The randomization $\phi_c$ is essential: without it, the sequence would be deterministic and potentially biased for particular game structures. Randomizing an otherwise deterministic low-discrepancy construction is also the standard route by which randomized QMC recovers Monte Carlo style marginal correctness and error estimates in integration settings \citep{cranley1976randomization,owen1997monte}; in adaptive MCCFR, the persistent reuse of the same phase creates the additional phase selection issue analyzed above.

\paragraph{Local Discrepancy and a Temporal-Covariance Hypothesis}

The key advantage of Weyl sequences lies in their \textit{low discrepancy}. Recall the star discrepancy of a finite point set $u_0,\ldots,u_{N-1}\in[0,1)$,
$D_N^* := \sup_{b\in(0,1]} \big|\tfrac{1}{N}\#\{n<N: u_n\in[0,b)\} - b\big|$.
Kronecker sequences $\{ng \bmod 1\}$ with a badly approximable rotation number satisfy $N D_N^* = O(\log N)$, and the golden ratio, whose continued fraction is $[0;1,1,1,\ldots]$, attains the most favorable constant \citep{niederreiter1992random}. This yields the following deterministic frequency guarantee for our per node streams.

\begin{theorem}[Deterministic per-node frequency error]
\label{thm:freq}
Fix a concrete chance node $c$ with outcome distribution $f_c=(p_1,\ldots,p_m)$, and let $\hat p_k^{(N)}$ be the empirical frequency of outcome $k$ among the first $N$ draws of the stream $u_{c,n}=(\phi_c+ng)\bmod 1$, $n=0,\ldots,N-1$, mapped through the quantile rule of Section~\ref{sec:method}. Then there is an absolute constant $C$, depending only on $g=(\sqrt5-1)/2$, such that for every initial phase $\phi_c\in[0,1)$ and every $N\ge 1$,
\[
\max_{1\le k\le m}\big|\hat p_k^{(N)}-p_k\big| \;\le\; \frac{C\log(N+1)}{N},
\qquad
\sum_{k=1}^m\big|\hat p_k^{(N)}-p_k\big| \;\le\; \frac{C\,m\log(N+1)}{N},
\]
whereas i.i.d.\ sampling incurs expected error $\Theta(N^{-1/2})$ per outcome with $p_k\in(0,1)$.
\end{theorem}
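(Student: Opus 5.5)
The plan is to reduce the frequency bound to a discrepancy statement about the unshifted Kronecker sequence $\{ng\}$, uniformly over all intervals on the circle, and then read off the outcome frequencies as interval counts.

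\emph{Step 1: outcomes as intervals.} By the quantile rule of Section~\ref{sec:method}, outcome $k$ is selected exactly when $u_{c,n}\in J_k:=[a_{k-1},a_k)$, where $a_k=\sum_{j\le k}p_j$. Hence
\[
\hat p_k^{(N)}=\tfrac1N\#\{n<N:u_{c,n}\in J_k\}, \qquad p_k=|J_k|,
\]
so $|\hat p_k^{(N)}-p_k|$ is a local discrepancy of the shifted point set for the interval $J_k$.

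\emph{Step 2: remove the shift.} The condition $(\phi_c+ng)\bmod 1\in J_k$ is equivalent to $ng\bmod 1\in J_k-\phi_c$, taken mod $1$. This set is either a single subinterval of $[0,1)$ or a union of two arcs $[0,\beta)\cup[\alpha,1)$ with total length $p_k$. Write $D_N$ for the extreme discrepancy (sup over all intervals $[\alpha,\beta)$) of the Kronecker points $x_n=ng\bmod 1$. It satisfies $D_N\le 2D_N^*$, by differencing two anchored intervals. A wrapped arc is the complement of an interval, so its local discrepancy is also at most $D_N$. Therefore, for every $\phi_c$ and every $k$,
\[
|\hat p_k^{(N)}-p_k|\le D_N\le 2D_N^*(x_0,\dots,x_{N-1}).
\]
This is where uniformity in $\phi_c$ comes from: the bound never depends on the shift.

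\emph{Step 3: the classical estimate.} I would invoke the theorem quoted just before the statement \citep{niederreiter1992random}. Because $g$ has bounded partial quotients (all equal to $1$), $ND_N^*\le C_0\log N + C_1$ for all $N\ge1$. The cleanest self-contained route is the Ostrowski decomposition of $N$ in Fibonacci numbers $F_j$. First, for $N=F_j$ the points $\{ng\}$ satisfy $F_jD^*_{F_j}=O(1)$: they are a small perturbation of the lattice $\{i/F_j\}$, since $|F_jg-F_{j-1}|\le 1/F_{j+1}$. Second, a general $N$ splits into $O(\log N)$ blocks of Fibonacci length. Each block is a shifted copy of a Fibonacci point set, and Step 2 shows that shifts do not change the per-block bound. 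Summing the blocks gives $O(\log N)$ in total.

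I expect the main obstacle to be bookkeeping at small $N$ and in the constants. At $N=1$ the term $\log N$ vanishes, yet $ND_N^*$ can be close to $1$. Writing the bound as $C\log(N+1)$, with $C\ge 1/\log 2$ absorbing the additive $C_1$, handles this. The constant then depends only on $g$, as claimed.

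\emph{Step 4: assemble and compare.} Taking the max over $k$ gives the first inequality. Summing the per-$k$ bounds over the $m$ outcomes gives the $\ell_1$ bound $Cm\log(N+1)/N$. A sharper $O(\log N/N)$ without the factor $m$ could be obtained by telescoping anchored intervals, but the stated bound is all that is needed.

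For the i.i.d.\ comparison, $N\hat p_k^{(N)}\sim\mathrm{Bin}(N,p_k)$. The upper bound $\mathbb{E}|\hat p_k^{(N)}-p_k|\le\sqrt{p_k(1-p_k)/N}$ follows from Jensen. A matching lower bound of order $\sqrt{p_k(1-p_k)/N}$ follows from the central limit theorem together with uniform integrability, or from a direct mean-absolute-deviation formula for the binomial. Together these give $\Theta(N^{-1/2})$ for every fixed $p_k\in(0,1)$.
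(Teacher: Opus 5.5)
Your proposal is correct and follows essentially the same route as the paper: you write each outcome as an interval $J_k$, absorb the shift $\phi_c$ by noting that a rotated interval is an interval or a wrapped arc (you make this exact via complements, where the paper only claims a constant-factor loss), bound by $D_N\le 2D_N^*$, and invoke the $O(\log N)$ Kronecker bound for bounded partial quotients; your Fibonacci/Ostrowski sketch, small-$N$ bookkeeping and binomial mean-absolute-deviation argument fill in details the paper leaves to the citation. One caution about your aside in Step~4: the $\ell_1$ bound cannot be made free of the factor $m$ in general, because for uniform $p$ with $m\gg N$ at most $N$ outcomes are ever drawn, so $\sum_k|\hat p_k^{(N)}-p_k|\ge 2(1-N/m)$, and telescoping over the anchored errors at the cut points $a_k$ cannot help because the sawtooth error function has total variation of order one; you never use this claim, so the proof stands.
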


\begin{proof}
Outcome $k$ is selected when $u_{c,n}$ falls in the half open interval
\[
I_k=\left[\sum_{j<k}p_j,\sum_{j\le k}p_j\right).
\]
Therefore, $|\hat p_k^{(N)}-p_k|$ is bounded by the extreme (interval) discrepancy of the point set and hence by $2D_N^*$. The point set is the rotation of $\{ng\bmod 1\}_{n<N}$ by $\phi_c$; discrepancy over wrapped arcs is invariant under rotation, and the counting error of any interval is bounded by that of at most two wrapped arcs, so the shift costs at most a constant factor uniformly in $\phi_c$. Finally, for $g$ with bounded partial quotients the Kronecker sequence satisfies $N D_N^* = O(\log(N+1))$ with an absolute constant \citep{niederreiter1992random}. Summing over the $m$ outcomes gives the second bound.
\end{proof}

\textbf{What the theorem controls.} The guarantee is per-node and per-stream: it bounds the unweighted outcome counts in the first $N$ draws consumed by one concrete node, where $N$ is that node's visit count rather than the global iteration count $T$. Because the bound holds uniformly over phases, pauses between visits leave the consumed prefix and its discrepancy unchanged, which is what makes a stream persistent across iterations well behaved.

For fixed per-outcome values, this count discrepancy immediately controls the corresponding unweighted sample mean. MCCFR instead supplies a time-varying scalar contribution $g_t(o)$, so the cumulative chance contribution depends on how the outcome indicators covary across time with these evolving values. I.i.d. draws remove temporal covariance between distinct draws, whereas a persistent Weyl stream deliberately introduces temporal dependence.

Our mechanism hypothesis is that this dependence produces negative cumulative covariance, or outright cancellation, once the outcomes are projected through sufficiently stable downstream values. Three structural features follow from it and drive the experimental design: repeated visits supply the longer streams on which cancellation can accumulate, symmetry can cancel chance terms before temporal balancing acts on them, and low revisit counts leave little temporal structure to exploit. Section~\ref{sec:experiments} measures where this local temporal balance coincides with lower exploitability, and Appendix~\ref{app:scope} states how far the hypothesis is claimed to reach.

\section{Theoretical Analysis}
\label{sec:theory}

\paragraph{Variance Decomposition in MCCFR}

Throughout the analysis, MCCFR denotes the External Sampling variant adopted in this paper (Section~\ref{sec:mccfr-background}). As an interpretive approximation, we separate sampled-update variability into
\[
\widetilde{\mathrm{Var}} \approx V_{\mathrm{chance}} + V_{\mathrm{opp}}.
\]
CCS-MCCFR acts on the first term, and it does so through cumulative temporal covariance among chance contributions rather than by removing a fixed quantity: persistent low-discrepancy draws yield cancellation when downstream values vary slowly relative to the stream. The experiments locate the regimes in which this covariance effect is favorable, and Appendix~\ref{app:scope} states how the decomposition is used.

\paragraph{Temporal Covariance and Empirical Moderators}
\label{sec:revisit}

Two empirical features organize the effects across games. First, chance contributions must remain material after projection into regret differences, since symmetry can cancel them before temporal balancing matters. Second, the per-node streams need repeated visits for temporal coverage to differ meaningfully from isolated draws.

\textbf{The revisit diagnostic.} CCS-MCCFR spreads a node's outcomes across its repeated visits, so the length of a node's stream measures how much cross-visit structure the sampler has to work with. A node visited once offers none. In the tested set, Kuhn and Leduc have tens to hundreds of visits and show substantial gains, whereas a reduced Flop Hold'em has 75\% of chance nodes visited once and shows no gain (Table~\ref{tab:revisit}). Symmetric Goofspiel is revisited often and gains less, which is where the second feature enters.

\begin{observation}[Scope of Effectiveness]
\label{obs:scope}
In our tested tabular games, favorable effects are associated with material chance contributions to regret differences, repeated per-node visits, and private-information coupling. These are empirical correlates of the proposed temporal-cancellation mechanism (Appendix~\ref{app:scope}).
\end{observation}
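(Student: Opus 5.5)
Observation~\ref{obs:scope} is an empirical association rather than a theorem, so I would establish it in two layers. The first is a theoretical reduction that explains why each of the three correlates should matter. The second is a set of paired experiments showing that the correlates co-vary with the measured exploitability reductions. The plan is to state precisely which mechanism each correlate feeds into, and then check, game by game, that gains appear when the correlates are present and vanish when one is absent.

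\textbf{Theoretical reduction.} I would write the cumulative chance contribution at a node $c$ as $\sum_{n<N} g_{t_n}(o_{c,n})$, where $t_n$ is the iteration of the $n$-th visit, and split it into two terms:
\[
\sum_{n<N}\bar g_{t_n}+\sum_{n<N}\big(g_{t_n}(o_{c,n})-\bar g_{t_n}\big).
\]
If the $g_{t_n}$ were frozen at a single $g$, the second term would equal $N\sum_k g(k)\,(\hat p_k^{(N)}-p_k)$. Theorem~\ref{thm:freq} bounds this by $C\,G\,m\log(N+1)$, against the $\Theta(\sqrt N)$ scale for i.i.d.\ draws. For slowly varying $g_{t_n}$, Abel summation would add a term proportional to the total variation $\sum_n\|g_{t_{n+1}}-g_{t_n}\|_\infty$ multiplied by the running discrepancy. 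The adaptive bias term is separately controlled by Proposition~\ref{prop:biasbound}.

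This decomposition exhibits each correlate directly:
\begin{itemize}
\item \emph{Revisits.} The advantage only appears when $N$ is large enough that $\log N\ll\sqrt N$; for $N=1$ both samplers coincide.
\item \emph{Materiality.} If $g_t(k)$ is constant in $k$ (for example, because symmetry cancels the chance terms in regret differences), the second term is identically zero under both samplers.
\item \emph{Private-information coupling.} This is what makes $g_t$ vary across outcomes at nodes that the traverser's regrets depend on.
\end{itemize}

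\textbf{Empirical step.} I would measure per-node visit counts (Table~\ref{tab:revisit}) and correlate them with the paired-bootstrap exploitability reductions. I would then vary one factor at a time while holding the others fixed. Symmetry is probed by Goofspiel, where revisits are high but the gain is small. The revisit count is probed by reduced Flop Hold'em, where most nodes are visited once and there is no gain. Private information is probed by the controlled Leduc deck family described in Section~\ref{sec:experiments}.

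\textbf{Main obstacle.} The hardest step is disentangling the correlates, since revisit count, symmetry and private information are not independent across natural games. My first attempt would be the controlled deck-expansion configurations and the persistence ablation: resetting the stream per traversal removes the temporal structure while leaving the game unchanged. This isolates the revisit channel, and the residual effects are then attributable to coupling. Consistent with the statement, I would claim only association within the tested set, as delimited in Appendix~\ref{app:scope}.
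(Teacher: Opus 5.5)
Your core evidence and your final claim match the paper's. The observation rests only on the descriptive comparison in Table~\ref{tab:revisit}: Kuhn and Leduc are revisited tens to hundreds of times and gain substantially, reduced Flop and the HUNL endgames sit at median 1 to 10 visits with null results, and Goofspiel is revisited often but is symmetric and gains little. Appendix~\ref{app:scope} also explicitly denies that the three features are necessary or sufficient.

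The gap is the one-factor-at-a-time design you build around this, which misreads what the experiments vary. The deck family of Table~\ref{tab:bigleduc} holds the private-information structure fixed by construction (one private hole card, one public board card, unchanged betting). What it varies is the outcome space and hence revisit exposure (median visits 27, 21, 15). Its gains (24.6\%, 34.0\%, 19.0\%) are not even monotone in that exposure, and it cannot probe private information at all. The paper's only evidence for that correlate is the cross-game contrast between the poker games and Goofspiel. Goofspiel differs from poker in several respects at once, so it is not a clean symmetry test either.

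The persistence ablation changes the sampler, not the game. The measured cell resets per iteration; the per-traversal variant of Theorem~\ref{thm:reset} is exactly i.i.d.\ in law and was not measured. The reset cell is indistinguishable from vanilla. This shows that cross-iteration persistence is needed, which is consistent with the revisit correlate. But there is no residual effect to attribute to coupling, and no game-level feature is varied.

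Finally, your criterion that gains appear when the correlates are present is contradicted by Liar's Dice, which you omit. Its private dice are dealt at a root chance node consumed on every traversal, yet the 200-seed endpoint is a null ($-0.19\%$). The paper uses exactly this case to rule out root private chance as sufficient, and it declines to attribute the boundary cases to a single cause. What survives is a one-directional association: the favorable configurations share the three features, the low-exposure and symmetric configurations each lack at least one, and Liar's Dice is left unexplained.

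The theoretical layer also cannot carry the weight you put on it. Your frozen-$g$ bound $C\,G\,m\log(N+1)$ is correct and matches the paper's remark that count discrepancy controls fixed-value sample means. However, the realized $g_{t_n}$ of Proposition~\ref{prop:biasbound} is defined after conditioning on all randomness outside the phase at $c$. It therefore contains that traversal's sampled opponent actions and deeper chance draws, and generically moves by order $G$ between consecutive visits. The total-variation term in your Abel summation is then of order $NG$, and the bound $C\log(N+1)(G+\mathrm{TV})$ becomes worse than the trivial $2NG$. The slowly varying premise is exactly the unproven part, which is why Appendix~\ref{app:scope} states that neither the sign nor the magnitude of the temporal covariance is guaranteed.

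Proposition~\ref{prop:biasbound} does not control the adaptive term either. It only bounds it by $2G\,\delta_{c,t}$, and $\delta_{c,t}$ is never shown to be small. Private-information coupling also does not emerge from your decomposition, since payoff-relevant public chance such as Goofspiel's point cards likewise makes $g_t$ vary across outcomes. At most, the decomposition motivates the revisit and materiality correlates. Keep this layer as heuristic motivation, as the paper does in its temporal-covariance hypothesis, rather than presenting it as a derivation of the correlates.
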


\paragraph{Compatibility with Regret Update Rules}
\label{sec:theory-orth}

CCS-MCCFR modifies the chance draw while leaving the regret update algebra unchanged, so it can be implemented together with update rules such as Discounted CFR and Linear CFR \citep{brown2019solving}. The update rule does interact with the mechanism, because it changes the downstream weights paired with each phase and therefore the temporal covariance. Section~\ref{sec:orthogonality} measures this interaction on a composition grid in which CCS-MCCFR improves every tested update-rule configuration, Linear CFR included.

\paragraph{Game Dependent Effectiveness}

The observed effect may depend on:
\begin{itemize}
\item \textbf{Chance decision coupling:} When chance is interleaved with decisions and reveals private or public information, its temporally correlated contributions enter many downstream regret differences. Root-only chance also carries this coupling: Kuhn deals private cards once and still yields one of the largest measured gains.
\item \textbf{Outcome symmetry:} Symmetric payoff structure can cancel chance contributions in regret differences, leaving less variation for temporal balancing to affect.
\end{itemize}

\begin{observation}[Chance Decision Coupling]
\label{obs:coupling}
In the tested games, larger reductions are associated with repeated chance visits whose outcomes couple to downstream private-information decisions, while symmetry is associated with smaller effects (Appendix~\ref{app:scope}).
\end{observation}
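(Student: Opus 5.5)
Observation~\ref{obs:coupling} is an empirical association rather than a deductive claim. I would therefore support it in two parts: a small analytic model that explains why coupling should help and symmetry should hurt, followed by paired experiments that check the predicted ordering.

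\textbf{Analytic model.} I would start from a single concrete chance node $c$ with per-outcome downstream contributions $g_t(o)$.
\begin{itemize}
\item Write the cumulative chance error at $c$ over its $N$ visits as $\sum_{n<N}\big(g_{t_n}(o_{c,n})-\bar g_{t_n}\big)$.
\item Split $g_{t_n}=g^\ast+(g_{t_n}-g^\ast)$, where $g^\ast$ is a frozen reference function.
\item By Theorem~\ref{thm:freq}, the frozen part contributes at most $\|g^\ast\|_\infty\, C\,m\log(N+1)$ in total.
\item The i.i.d.\ analogue of the frozen part has standard deviation of order $\sqrt{N\,\mathrm{Var}_{f_c}(g^\ast)}$.
\item The drift part is a temporal-covariance term; I would bound it crudely by the total variation of $t\mapsto g_t$ along the visit sequence.
\end{itemize}
The model then says that CCS-MCCFR gains over i.i.d.\ sampling roughly in proportion to $\sqrt{N\,\mathrm{Var}_{f_c}(g^\ast)}$. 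It does so when two conditions hold: $N$ is large, which is the revisit channel, and $\mathrm{Var}_{f_c}(g^\ast)$ is large after projection into regret differences.

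\textbf{Coupling and symmetry.} Next I would make the two moderators concrete in this model. For coupling, I would argue that when outcomes at $c$ reveal private information feeding downstream information sets, the regret differences $v(I,a)-v(I,\sigma)$ depend on $o$ non-trivially, so $\mathrm{Var}_{f_c}(g^\ast)>0$. For symmetry, I would show that if a payoff-preserving permutation of outcomes leaves each regret difference invariant up to sign pairing, the projected $g^\ast$ is nearly constant. In that case $\mathrm{Var}_{f_c}(g^\ast)\approx 0$, and neither sampler has much chance error to remove.

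\textbf{Experiments.} I would then test the ordering the model predicts. I would tabulate per-game revisit statistics and a symmetry or coupling indicator against the paired exploitability reductions and their bootstrap intervals. The controlled Leduc deck family is the natural vehicle here, since it varies private-information structure while holding the rules fixed. I would also ablate persistence using per-traversal resets, to confirm that the gain requires repeated visits.

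\textbf{Main obstacle.} The hardest step is that the drift term is adaptive, so $g_t$ may covary with the phase in exactly the way Proposition~\ref{prop:biasbound} isolates. I would not attempt a sharp bound there. Instead I would measure the empirical per-node frequency behavior and the realized reductions directly. Accordingly, the claim remains an association over the tested games, and its scope is the one stated in Appendix~\ref{app:scope}.
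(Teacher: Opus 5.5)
\textbf{Where your support diverges from the paper's.} The paper supports Observation~\ref{obs:coupling} only descriptively. It checks the informal coupling and symmetry reasoning of the game-dependence paragraph against the cross-game paired reductions: Kuhn, Leduc and the deck family large; Goofspiel smaller despite frequent revisits; reduced Flop and the HUNL endgames null at low revisit counts. Appendix~\ref{app:scope} then declares these moderators to be correlates, neither necessary nor sufficient. Your tabulation step is that argument, but two parts of your plan misfire.

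The Leduc deck family does not vary private-information structure. It holds the chance/decision and private-information structure fixed and enlarges only the deck, so it cannot isolate coupling; the coupling and symmetry contrasts can only come from comparing games.

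More importantly, you never confront Liar's Dice. Its private root dice are drawn at a chance node consumed on every traversal and feed every later bid, yet it shows a precisely localized null ($-0.19\%$, 200 seeds, interval crossing zero). It meets your stated conditions without the gain, so a plan to confirm a predicted ordering would fail there. The observation survives only as a loose association with an acknowledged exception, which is exactly how the paper frames it.

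\textbf{The analytic model.} Your frozen-part comparison is fine, but the model fails at two steps.

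First, your conclusion that the gain scales like $\sqrt{N\,\mathrm{Var}_{f_c}(g^\ast)}$ needs the drift term to be $o(\sqrt N)$, and a total-variation bound cannot give this. If $g_t$ includes downstream sampling noise, it changes by order one per visit. Even its conditional mean follows a regret-matching path whose variation is not controlled a priori; at a benign $t^{-1/2}$ rate it typically accumulates to order $\sqrt N$. Abel summation against the $Cm\log(N+1)$ partial-sum discrepancy of Theorem~\ref{thm:freq} then gives order $\sqrt N\log N$, above the i.i.d.\ scale. This is precisely the cumulative temporal covariance that Appendix~\ref{app:scope} says may vanish or take either sign. Your model is therefore a heuristic for slowly varying $g_t$, not a consequence.

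Second, the symmetry step is wrong. If the permutation flips sign, $g^\ast(\pi o)=-g^\ast(o)$ forces zero mean but allows $\mathrm{Var}_{f_c}(g^\ast)$ as large as $\|g^\ast\|_\infty^2$. In your own model that is the most favorable case for the Weyl stream. Exact invariance only makes $g^\ast$ constant on orbits. Goofspiel's symmetry is also not an exchangeability of outcomes: its point cards carry different values, and the symmetry is the public, two-sided effect of each reveal. Moreover, when the traverser observes the outcome, as with Goofspiel's public reveals, the outcome at $c$ decides whether a downstream information set is reached at all. So $g^\ast$ carries a reach indicator of variance $p(1-p)$ times the squared increment, regardless of payoff symmetry.

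Your model therefore predicts no attenuation for Goofspiel and a gain for Liar's Dice. Keep it as motivation, and rest the observation, as the paper does, on the descriptive cross-game evidence.
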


\section{Experiments}
\label{sec:experiments}

\paragraph{Experimental Setup.} We evaluate vanilla i.i.d., antithetic, and CCS-MCCFR chance sampling on Kuhn, Leduc, Goofspiel-4, and Liar's Dice, using exact OpenSpiel exploitability and equal node-touch budgets \citep{lanctot2009monte,farina2020stochastic}. The four main games use 200 paired seeds; the controlled Leduc-family expansion uses 50. Significance decisions use 10{,}000-resample paired-bootstrap intervals on per-seed absolute differences, expressed relative to the vanilla mean, and an effect is detected when its 95\% interval clears zero. Appendix~\ref{app:experiment-protocol} gives the game definitions, sampler implementations, budgets, seed counts, ribbon conventions, and auxiliary paired tests.

\paragraph{Main Results}

Kuhn and Leduc exhibit 27.64\% and 24.59\% final exploitability reductions, respectively, with paired-bootstrap confidence intervals well above zero. Goofspiel-4 shows a smaller but significant 4.27\% reduction, whereas the Liar's Dice interval straddles zero. Figures~\ref{fig:convergence} and~\ref{fig:main-reduction} show the raw exploitability and paired-reduction trajectories, and Table~\ref{tab:main} reports the endpoints.

\begin{figure}[t]
\centering
\includegraphics[width=0.8\textwidth]{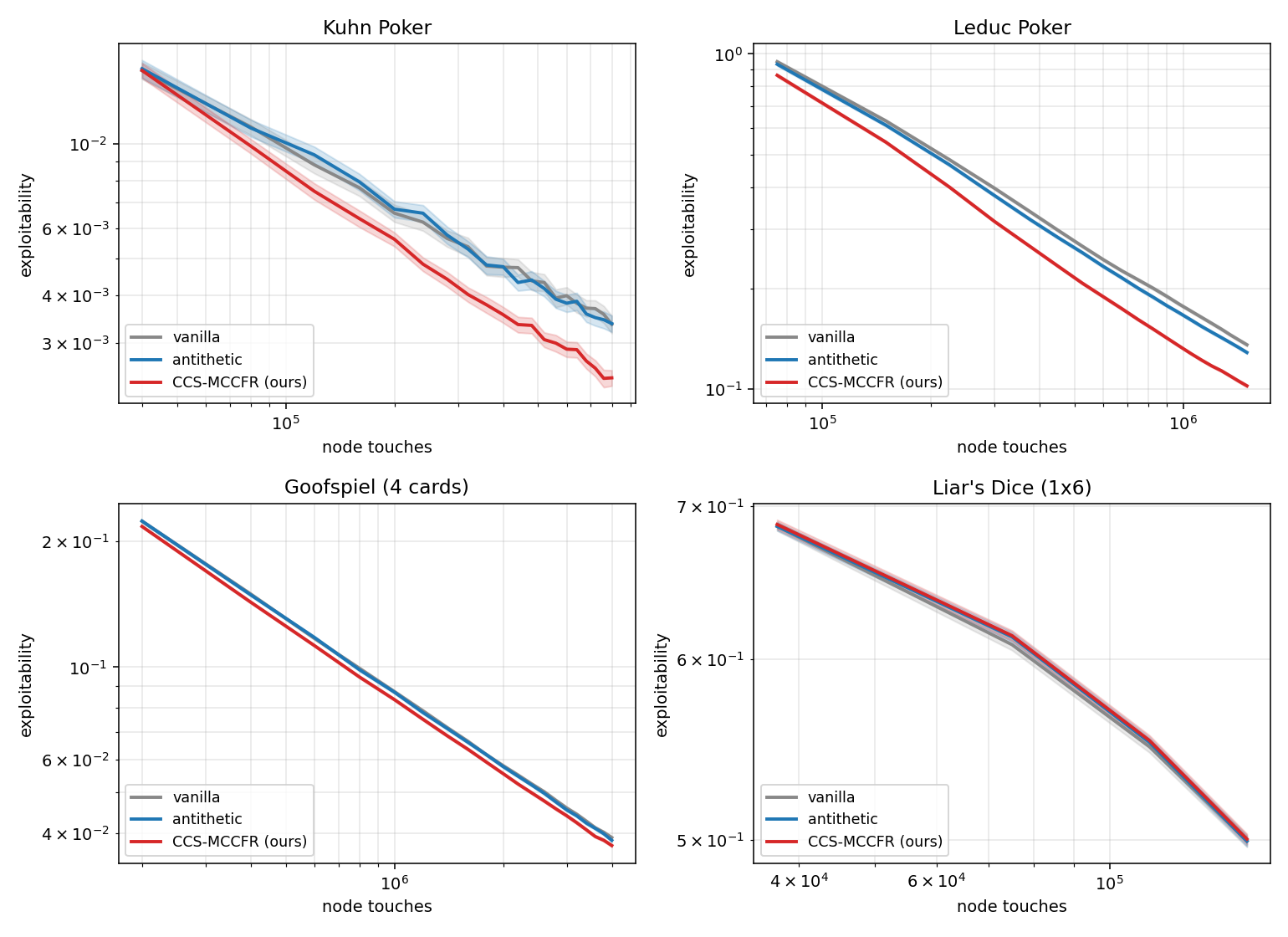}
\caption{Exploitability vs.\ node touches for four games. Each panel shows vanilla, antithetic, and CCS-MCCFR chance sampling; bands are 95\% confidence intervals (mean $\pm 1.96$\,SEM over $n{=}200$ paired seeds). CCS-MCCFR improves the poker benchmarks, yields a smaller detected effect on Goofspiel, and places the final Liar's Dice endpoint near vanilla with a paired reduction interval that includes zero.}
\label{fig:convergence}
\end{figure}

\begin{figure}[t]
\centering
\includegraphics[width=0.8\textwidth]{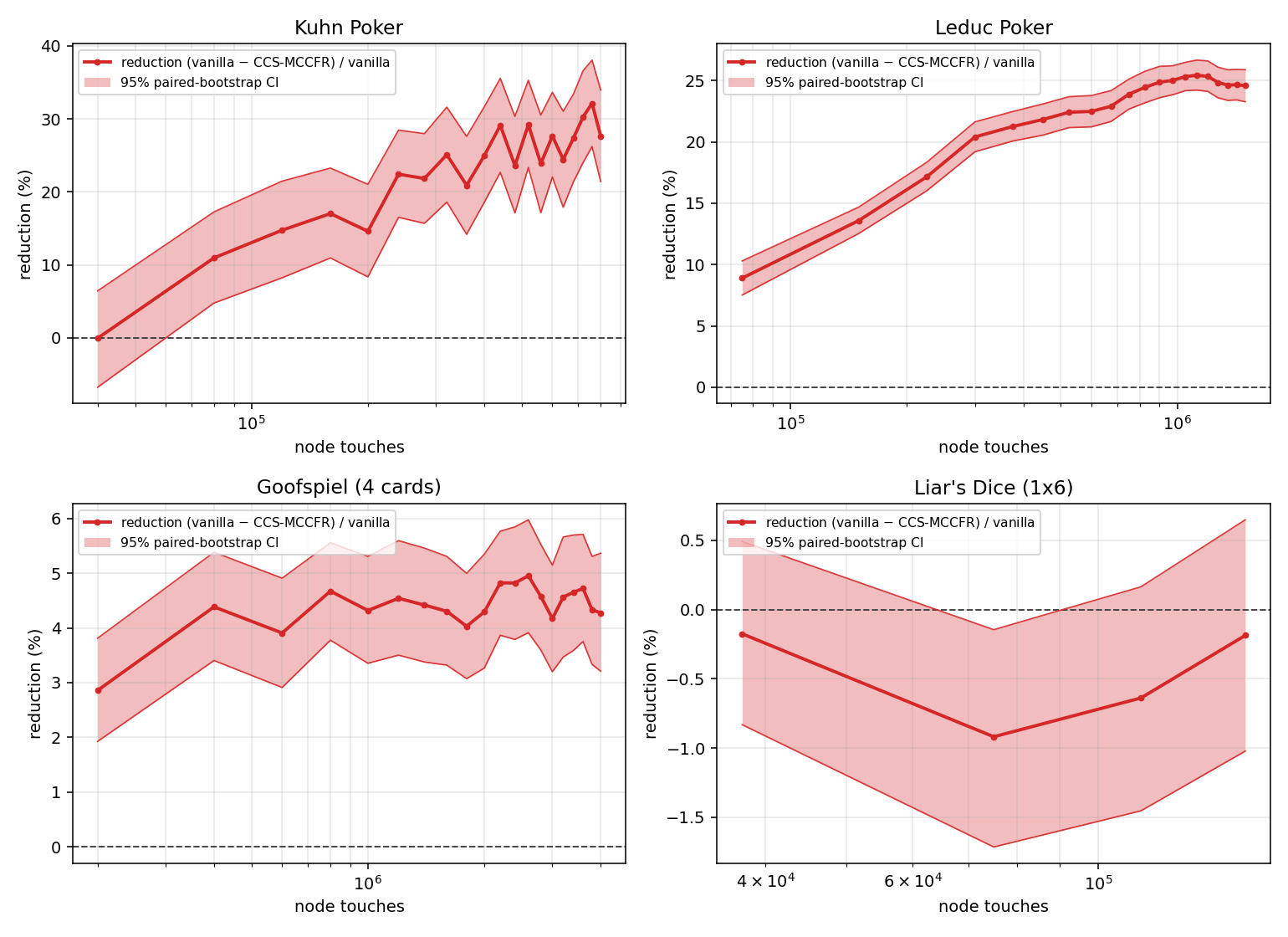}
\caption{Paired reduction $(\text{vanilla}-\text{CCS-MCCFR})/\text{vanilla}$ (\%) versus node touches for the four main games, with the 95\% paired-bootstrap CI band and a zero reference line ($n{=}200$ paired seeds). This uses the same statistic and convention as the HUNL reduction figures (Figure~\ref{fig:hunl}, Figure~\ref{fig:hunl-river}) and is the paired-reduction view of the raw exploitability ribbons in Figure~\ref{fig:convergence}. The band clears zero throughout for Kuhn, Leduc, and Goofspiel---a detected CCS-MCCFR gain that grows with the node touch budget---and straddles zero for Liar's Dice, matching the null in Table~\ref{tab:main}.}
\label{fig:main-reduction}
\end{figure}

\begin{table}[h]
\centering
\small
\caption{Final exploitability reduction vs.\ vanilla MCCFR, paired design ($n{=}200$ seeds for all four games). Exploitability is reported as mean $\pm 1.96$\,SEM over seeds. The CCS-MCCFR and Antithetic reduction columns are both relative to the same vanilla baseline. Significance is assessed using 10{,}000 resample paired bootstrap 95\% intervals; a result is significant iff its interval clears zero. In configurations with a detected CCS-MCCFR gain, the simple antithetic construction does not reproduce its magnitude.}
\label{tab:main}
\setlength{\tabcolsep}{4pt}
\resizebox{\textwidth}{!}{%
\begin{tabular}{lccccc}
\toprule
Game & Budget & Vanilla & CCS-MCCFR & CCS-MCCFR reduction & Antithetic reduction \\
\midrule
Kuhn        & 800k  & $0.00335\,{\pm}\,0.00018$ & $0.00243\,{\pm}\,0.00011$ & $\mathbf{27.64\%}$ & $-0.46\%$ \\
Leduc       & 1.5M  & $0.13563\,{\pm}\,0.00141$ & $0.10228\,{\pm}\,0.00100$ & $\mathbf{24.59\%}$ & $5.16\%$ \\
Goofspiel   & 4M    & $0.03902\,{\pm}\,0.00031$ & $0.03735\,{\pm}\,0.00030$ & $\mathbf{4.27\%}$  & $1.50\%$ \\
Liar's Dice & 150k  & $0.49944\,{\pm}\,0.00307$ & $0.50038\,{\pm}\,0.00268$ & $-0.19\%$ & $0.03\%$ \\
\bottomrule
\end{tabular}}
\end{table}

\textbf{Key findings:}
\begin{itemize}
\item \textbf{Poker games (Kuhn, Leduc):} CCS-MCCFR produces the largest main-benchmark reductions, 27.64\% and 24.59\%, with paired-bootstrap CIs well above zero. Both games repeatedly revisit private-information chance allocations under the node-touch budget; Leduc also interleaves chance and decisions.
\item \textbf{Goofspiel:} The 4.27\% reduction is smaller but significant ($4.3\%\,{\pm}\,1.1\%$, paired-bootstrap 95\% CI), extending the detected effect beyond poker while illustrating attenuation in a symmetric chance structure (Observation~\ref{obs:coupling}).
\item \textbf{Liar's Dice:} The 200-seed experiment precisely localizes an endpoint near zero: CCS-MCCFR and antithetic reductions are $-0.19\%$ and $0.03\%$, with both paired intervals crossing zero. This high-power boundary contrasts with Kuhn, which also places private chance at the root, and therefore rules out root-only placement as a sufficient explanation of the cross-game pattern.
\end{itemize}

\textbf{Antithetic control.} The antithetic column of Table~\ref{tab:main} evaluates a simpler paired correlation scheme, not CCS-MCCFR, across all four small games. It does not reproduce the magnitude of the CCS-MCCFR gains in the configurations where a gain is detected (e.g.\ Kuhn $-0.5\%$ vs.\ CCS-MCCFR $27.6\%$). On Liar's Dice, both constructions remain centered near the same vanilla endpoint.

\paragraph{Larger Poker Benchmarks: Scaling the Leduc Deck}
\label{sec:bigleduc}

The main experiments establish strong gains on standard Kuhn and Leduc. We next isolate deck size in a controlled Leduc-style limit-poker family implemented with OpenSpiel's \texttt{universal\_poker}: the structure remains fixed (one private hole card, one public board card, two betting rounds, fold/call betting), while the deck grows from 6 to 10 and 12 cards. This preserves the chance/decision and private-information structure while expanding the outcome space.

We use a paired design in which the same seed drives vanilla and CCS-MCCFR runs, with $n=50$ seeds and a 10{,}000 resample bootstrap confidence interval on the relative reduction. A result is positive when this interval clears zero.

\begin{table}[h]
\centering
\small
\caption{Larger Leduc family poker (universal\_poker, fold/call, 2 rounds, 30k node touches, $n=50$ paired). Exploitability is mean $\pm 1.96$\,SEM over seeds; reduction is the relative drop of the mean (positive $=$ improvement), reported as a point estimate. Significance is assessed using the paired bootstrap 95\% interval; all reductions are significant (interval clears zero; paired Cohen's $d=0.39$ to $0.60$).}
\label{tab:bigleduc}
\setlength{\tabcolsep}{4pt}
\begin{tabular}{lcccc}
\toprule
Deck & Vanilla & CCS-MCCFR & Reduction & two-sided paired $t$-test $p$ \\
\midrule
6 card (standard) & $0.00544\,{\pm}\,0.00090$ & $0.00393\,{\pm}\,0.00051$ & $\mathbf{27.65\%}$ & $0.008$ \\
10 card           & $0.00864\,{\pm}\,0.00120$ & $0.00570\,{\pm}\,0.00069$ & $\mathbf{34.01\%}$ & $0.0001$ \\
12 card           & $0.00823\,{\pm}\,0.00096$ & $0.00666\,{\pm}\,0.00059$ & $\mathbf{19.05\%}$ & $0.006$ \\
\bottomrule
\end{tabular}
\end{table}

The controlled deck configurations yield reductions of 27.65\%, 34.01\%, and 19.05\%, each with a bootstrap CI strictly above zero (Table~\ref{tab:bigleduc}; convergence curves in Figure~\ref{fig:bigleduc}). The result is stable across the tested 6/10/12-card expansion, with the 10-card configuration producing the largest measured reduction. This controlled family complements the standard Leduc result by showing that the benefit persists when the chance-outcome space is enlarged while the game structure is held fixed.

\begin{figure}[t]
\centering
\includegraphics[width=\textwidth]{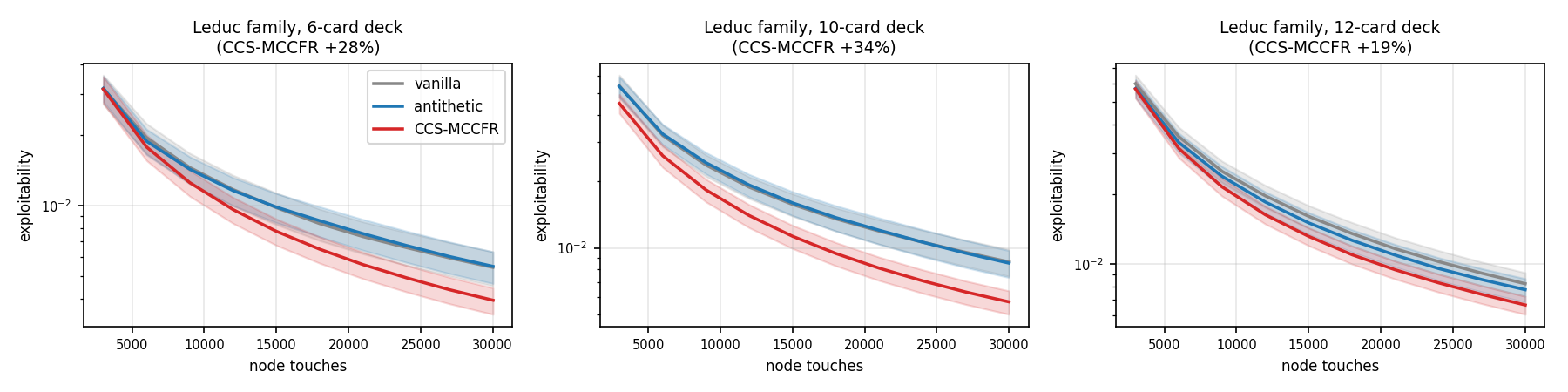}
\caption{Exploitability vs.\ node touches for the Leduc family poker games (6/10/12 card decks). Bands are 95\% confidence intervals (mean $\pm 1.96$\,SEM over $n{=}50$ paired seeds), using the same convention as Figure~\ref{fig:convergence}; since this figure uses 50 paired seeds whereas Figure~\ref{fig:convergence} uses 200, the absolute ribbon widths are not directly comparable across the two figures. The paired estimates favor CCS-MCCFR over vanilla throughout this 50 seed scaling experiment.}
\label{fig:bigleduc}
\end{figure}

\paragraph{An Empirical Revisit Diagnostic}
\label{sec:revisit-exp}

What distinguishes the strongest gain cases from the small effect and no gain cases? Our hypothesis suggests that repeated visits expose more cross-visit temporal structure for possible cancellation. Table~\ref{tab:revisit} reports this revisit diagnostic; Appendix~\ref{app:revisit-protocol} gives its measurement procedure.

The diagnostic separates the cases cleanly. Repeatedly revisited poker chance nodes benefit, the reduced Flop Hold'em configuration leaves most concrete streams nearly unused and shows no gain, and symmetric Goofspiel is revisited often yet gains less, which pairs exposure with symmetry as the two moderators.

At longer horizon, Leduc retains a 24.5\% reduction at 3M node touches (two-sided paired $t$-test $p=1.9\!\times\!10^{-6}$). A five-seed timing check at equal node-touch budget detects no solver-time difference (mean ratio $0.968$, paired $p=0.42$). Goofspiel-5 and full timing details are reported in Appendix~\ref{app:extra}.

\paragraph{Persistence Ablation}
\label{sec:persistence}

The persistence ablation separates the effective cross-iteration construction from a measured per-iteration reset: CCS-MCCFR improves both games, whereas the reset cell remains near vanilla. This supports across-iteration phase coupling as a contributor to the measured gains, while leaving the exact covariance pathway for future identification. A distinct per-traversal reset variant has the standard $O(1/\sqrt{T})$ External Sampling guarantee. The complete theorem, proof, implementation distinction, ablation table, and diagnostics appear in Appendix~\ref{app:reset-proof}.

\paragraph{Composition with Regret Update Accelerators}
\label{sec:orthogonality}

CCS-MCCFR can be implemented with different regret update rules because it changes the chance draw rather than the update formula. Table~\ref{tab:orthogonal} tests Discounted CFR and Linear CFR \citep{brown2019solving}, the two update rules that most later schedules build on \citep{xu2024dynamic,zhang2026faster}. CCS-MCCFR significantly reduces exploitability in every tested cell, with relative gains of 17.50\% to 24.30\%. LCFR plus CCS-MCCFR is the lowest of the six configurations and is 42.59\% below vanilla updates with i.i.d.\ chance. Appendix~\ref{app:composition-details} gives the update definitions, budget sweep, and implementation checks. A separate $2\times2$ experiment crosses the chance sampler with a Schmid-inspired opponent-action control variate restricted to External Sampling. The chance-sampling benefit survives with that control variate enabled, reaching 28.69\% on Kuhn and 25.64\% on Leduc; Appendix~\ref{app:es-control-variate} reports the design, the endpoints, and what the comparison establishes.

\begin{table}[h]
\centering
\small
\caption{Composition ablation: CCS-MCCFR $\times$ regret update rule on Leduc at $3\times10^5$ node touches (100 seeds, paired). Each chance cell reports exploitability (mean $\pm 1.96$\,SEM); the rightmost column is the paired CCS-MCCFR versus i.i.d.\ reduction point estimate. Significance is assessed using paired-bootstrap 95\% intervals. LCFR vs.\ vanilla update (i.i.d.\ chance): $+24.16\%$.}
\label{tab:orthogonal}
\begin{tabular}{lccc}
\toprule
Update rule & i.i.d.\ chance & CCS-MCCFR & reduction \\
\midrule
Vanilla CFR & $0.39539\,{\pm}\,0.00516$ & $0.31850\,{\pm}\,0.00381$ & \textbf{$+19.45\%$} \\
LCFR        & $0.29986\,{\pm}\,0.00540$ & $0.22699\,{\pm}\,0.00367$ & \textbf{$+24.30\%$} \\
DCFR        & $0.41196\,{\pm}\,0.00531$ & $0.33986\,{\pm}\,0.00406$ & \textbf{$+17.50\%$} \\
\bottomrule
\end{tabular}
\end{table}

\paragraph{HUNL transfer boundary.}
\label{sec:hunl}

We also test four fixed Libratus endgames with a standalone C++ External-Sampling MCCFR solver: Subgames~1--2 begin on the Turn and Subgames~3--4 on the River. Each retains a fixed action abstraction, uses no hand abstraction, and receives $2\times10^6$ iterations under a 25-seed paired i.i.d.-versus-Weyl protocol; exploitability is measured within the retained abstract endgame. The four final reduction point estimates are $+0.05\%$, $+0.00\%$, $+0.48\%$, and $+0.74\%$, and every paired interval crosses zero. These endgames therefore fall in the lower-exposure regime of the revisit diagnostic, with median visit counts of 3 to 10 against the tens to hundreds seen in tabular poker. Appendix~\ref{app:hunl-details} reports the full protocol, solver-revision scope, endpoint tables, and Turn/River trajectories; Table~\ref{tab:revisit} and Figure~\ref{fig:revisit} add the corresponding concrete-node exposure diagnostics.

\section{Conclusion}

CCS-MCCFR substantially reduces exploitability across tabular poker: 27.64\% on Kuhn, 24.59\% on standard Leduc, and 27.65\%, 34.01\%, and 19.05\% across a controlled 6/10/12-card Leduc family, with all paired-bootstrap intervals above zero. The Leduc benefit remains 24.5\% at 3M node touches, and combining CCS-MCCFR with Linear CFR produces the lowest measured composition cell, 42.59\% below vanilla updates with i.i.d.\ chance. Goofspiel-4 adds a smaller significant 4.27\% gain. Liar's Dice, the reduced Flop budget scan, Goofspiel-5, and four HUNL transfer stress tests place the endpoint at vanilla with no statistically detectable final-endpoint difference, and revisit exposure, symmetry, and private-information coupling organize the full pattern as empirical moderators.

The method achieves this as a minimal sampler change that leaves the chance law, the External-Sampling estimator, and the MCCFR regret updates intact. Its first $N$ draws at one concrete chance node, where $N$ is the node's consumed-draw count rather than global iterations, satisfy deterministic $O(\!\log(N+1)/N)$ frequency discrepancy, and fixed-index marginal correctness together with fixed-trajectory unbiasedness supply complementary local guarantees. The adaptive phase-selection bound then names the single dependence term that a global analysis must control, and per-traversal resetting already retains the standard $O(1/\sqrt{T})$ guarantee. The result is a sampler that costs nothing to adopt and delivers double-digit exploitability reductions wherever chance nodes are revisited under private information, and it opens adaptive correlated chance sampling as a distinct axis for accelerating equilibrium computation.

\section*{Acknowledgments}

The authors thank the anonymous reviewers for their helpful feedback.

\bibliographystyle{plainnat}
\bibliography{references}

\begin{thebibliography}{51}
\providecommand{\natexlab}[1]{#1}
\providecommand{\url}[1]{\texttt{#1}}
\expandafter\ifx\csname urlstyle\endcsname\relax
  \providecommand{\doi}[1]{doi: #1}\else
  \providecommand{\doi}{doi: \begingroup \urlstyle{rm}\Url}\fi

\bibitem[Avron et~al.(2016)Avron, Sindhwani, Yang, and Mahoney]{avron2016quasi}
Haim Avron, Vikas Sindhwani, Jiyan Yang, and Michael~W Mahoney.
\newblock Quasi-monte carlo feature maps for shift-invariant kernels.
\newblock \emph{Journal of Machine Learning Research}, 17\penalty0
  (120):\penalty0 1--38, 2016.

\bibitem[Baghal(2025)]{baghal2025pasur}
Sina Baghal.
\newblock Solving pasur using gpu-accelerated counterfactual regret
  minimization.
\newblock \emph{arXiv preprint arXiv:2508.06559}, 2025.

\bibitem[Bertram et~al.(2024)Bertram, F{\"u}rnkranz, and
  M{\"u}ller]{bertram2024efficiently}
Timo Bertram, Johannes F{\"u}rnkranz, and Martin M{\"u}ller.
\newblock Efficiently training neural networks for imperfect information games
  by sampling information sets.
\newblock In \emph{German Conference on Artificial Intelligence (K{\"u}nstliche
  Intelligenz)}, pages 17--29. Springer, 2024.

\bibitem[Brown and Sandholm(2018)]{brown2018superhuman}
Noam Brown and Tuomas Sandholm.
\newblock Superhuman ai for heads-up no-limit poker: Libratus beats top
  professionals.
\newblock \emph{Science}, 359\penalty0 (6374):\penalty0 418--424, 2018.

\bibitem[Brown and Sandholm(2019{\natexlab{a}})]{brown2019solving}
Noam Brown and Tuomas Sandholm.
\newblock Solving imperfect-information games via discounted regret
  minimization.
\newblock In \emph{Proceedings of the AAAI Conference on Artificial
  Intelligence}, volume~33, pages 1829--1836, 2019{\natexlab{a}}.

\bibitem[Brown and Sandholm(2019{\natexlab{b}})]{brown2019superhuman}
Noam Brown and Tuomas Sandholm.
\newblock Superhuman ai for multiplayer poker.
\newblock \emph{Science}, 365\penalty0 (6456):\penalty0 885--890,
  2019{\natexlab{b}}.

\bibitem[Brown et~al.(2019)Brown, Lerer, Gross, and Sandholm]{brown2019deep}
Noam Brown, Adam Lerer, Sam Gross, and Tuomas Sandholm.
\newblock Deep counterfactual regret minimization.
\newblock In \emph{International conference on machine learning}, pages
  793--802. PMLR, 2019.

\bibitem[Brown et~al.(2020)Brown, Bakhtin, Lerer, and Gong]{brown2020combining}
Noam Brown, Anton Bakhtin, Adam Lerer, and Qucheng Gong.
\newblock Combining deep reinforcement learning and search for
  imperfect-information games.
\newblock \emph{Advances in neural information processing systems},
  33:\penalty0 17057--17069, 2020.

\bibitem[Buchholz et~al.(2018)Buchholz, Wenzel, and Mandt]{buchholz2018quasi}
Alexander Buchholz, Florian Wenzel, and Stephan Mandt.
\newblock Quasi-monte carlo variational inference.
\newblock In \emph{Proceedings of the 35th International Conference on Machine
  Learning}, volume~80 of \emph{Proceedings of Machine Learning Research},
  pages 667--676. PMLR, 2018.

\bibitem[Burch et~al.(2012)Burch, Lanctot, Szafron, and
  Gibson]{burch2012efficient}
Neil Burch, Marc Lanctot, Duane Szafron, and Richard Gibson.
\newblock Efficient monte carlo counterfactual regret minimization in games
  with many player actions.
\newblock \emph{Advances in neural information processing systems}, 25, 2012.

\bibitem[Cranley and Patterson(1976)]{cranley1976randomization}
Roy Cranley and Thomas~NL Patterson.
\newblock Randomization of number theoretic methods for multiple integration.
\newblock \emph{SIAM Journal on Numerical Analysis}, 13\penalty0 (6):\penalty0
  904--914, 1976.

\bibitem[Davis et~al.(2020)Davis, Schmid, and Bowling]{davis2020low}
Trevor Davis, Martin Schmid, and Michael Bowling.
\newblock Low-variance and zero-variance baselines for extensive-form games.
\newblock In \emph{International Conference on Machine Learning}, pages
  2392--2401. PMLR, 2020.

\bibitem[Farina et~al.(2020)Farina, Kroer, and Sandholm]{farina2020stochastic}
Gabriele Farina, Christian Kroer, and Tuomas Sandholm.
\newblock Stochastic regret minimization in extensive-form games.
\newblock In \emph{International Conference on Machine Learning}, pages
  3018--3028. PMLR, 2020.

\bibitem[Farina et~al.(2021)Farina, Kroer, and Sandholm]{farina2021faster}
Gabriele Farina, Christian Kroer, and Tuomas Sandholm.
\newblock Faster game solving via predictive blackwell approachability:
  Connecting regret matching and mirror descent.
\newblock In \emph{Proceedings of the AAAI Conference on Artificial
  Intelligence}, volume~35, pages 5363--5371, 2021.

\bibitem[Gibson et~al.(2012)Gibson, Lanctot, Burch, Szafron, and
  Bowling]{gibson2012generalized}
Richard Gibson, Marc Lanctot, Neil Burch, Duane Szafron, and Michael Bowling.
\newblock Generalized sampling and variance in counterfactual regret
  minimization.
\newblock In \emph{Proceedings of the AAAI Conference on Artificial
  Intelligence}, volume~26, pages 1355--1361, 2012.

\bibitem[Halton(1960)]{halton1960efficiency}
John~H Halton.
\newblock On the efficiency of certain quasi-random sequences of points in
  evaluating multi-dimensional integrals.
\newblock \emph{Numerische Mathematik}, 2\penalty0 (1):\penalty0 84--90, 1960.

\bibitem[Johanson et~al.(2012)Johanson, Bard, Lanctot, Gibson, and
  Bowling]{johanson2012efficient}
Michael Johanson, Nolan Bard, Marc Lanctot, Richard~G Gibson, and Michael
  Bowling.
\newblock Efficient nash equilibrium approximation through monte carlo
  counterfactual regret minimization.
\newblock In \emph{AAMAS}, pages 837--846, 2012.

\bibitem[Kova{\v{r}}{\'\i}k et~al.(2022)Kova{\v{r}}{\'\i}k, Schmid, Burch,
  Bowling, and Lis{\'y}]{kovarik2022rethinking}
Vojt{\v{e}}ch Kova{\v{r}}{\'\i}k, Martin Schmid, Neil Burch, Michael Bowling,
  and Viliam Lis{\'y}.
\newblock Rethinking formal models of partially observable multiagent decision
  making.
\newblock \emph{Artificial Intelligence}, 303:\penalty0 103645, 2022.

\bibitem[Lanctot et~al.(2009)Lanctot, Waugh, Zinkevich, and
  Bowling]{lanctot2009monte}
Marc Lanctot, Kevin Waugh, Martin Zinkevich, and Michael Bowling.
\newblock Monte carlo sampling for regret minimization in extensive games.
\newblock \emph{Advances in neural information processing systems}, 22, 2009.

\bibitem[Li and Huang(2025)]{li2025efficient}
Boning Li and Longbo Huang.
\newblock Efficient online pruning and abstraction for imperfect information
  extensive-form games.
\newblock In \emph{The Thirteenth International Conference on Learning
  Representations}, 2025.

\bibitem[Li and Huang(2026{\natexlab{a}})]{li2026effective}
Boning Li and Longbo Huang.
\newblock Effective, efficient, and general information abstraction for
  imperfect-information extensive-form games.
\newblock \emph{arXiv preprint arXiv:2605.10900}, 2026{\natexlab{a}}.

\bibitem[Li and Huang(2026{\natexlab{b}})]{li2026parallel}
Boning Li and Longbo Huang.
\newblock Real-time parallel counterfactual regret minimization.
\newblock \emph{arXiv preprint arXiv:2605.19928}, 2026{\natexlab{b}}.

\bibitem[Li et~al.(2024)Li, Fang, and Huang]{li2024rl}
Boning Li, Zhixuan Fang, and Longbo Huang.
\newblock Rl-cfr: improving action abstraction for imperfect information
  extensive-form games with reinforcement learning.
\newblock In \emph{Proceedings of the 41st International Conference on Machine
  Learning}, pages 27752--27770, 2024.

\bibitem[Li et~al.(2026)Li, Wang, and Huang]{li2026pokerskill}
Boning Li, Baoxiang Wang, and Longbo Huang.
\newblock Pokerskill: Llms can play expert-level poker without training or
  solvers.
\newblock \emph{arXiv preprint arXiv:2605.30094}, 2026.

\bibitem[Lis{\'y} et~al.(2015)Lis{\'y}, Lanctot, and Bowling]{lisy2015online}
Viliam Lis{\'y}, Marc Lanctot, and Michael~H Bowling.
\newblock Online monte carlo counterfactual regret minimization for search in
  imperfect information games.
\newblock In \emph{AAMAS}, pages 27--36, 2015.

\bibitem[Madow and Madow(1944)]{madow1944theory}
William~G Madow and Lillian~H Madow.
\newblock On the theory of systematic sampling, i.
\newblock \emph{The Annals of Mathematical Statistics}, 15\penalty0
  (1):\penalty0 1--24, 1944.

\bibitem[Mania et~al.(2018)Mania, Guy, and Recht]{mania2018simple}
Horia Mania, Aurelia Guy, and Benjamin Recht.
\newblock Simple random search of static linear policies is competitive for
  reinforcement learning.
\newblock \emph{Advances in neural information processing systems}, 31, 2018.

\bibitem[McAleer et~al.(2023)McAleer, Farina, Lanctot, and
  Sandholm]{mcaleer2023escher}
Stephen~Marcus McAleer, Gabriele Farina, Marc Lanctot, and Tuomas Sandholm.
\newblock Escher: Eschewing importance sampling in games by computing a history
  value function to estimate regret.
\newblock In \emph{The Eleventh International Conference on Learning
  Representations}, 2023.

\bibitem[Meng et~al.(2025)Meng, Chen, Li, Yang, Zhang, and
  Gao]{meng2025reducing}
Linjian Meng, Wubing Chen, Wenbin Li, Tianpei Yang, Youzhi Zhang, and Yang Gao.
\newblock Reducing variance of stochastic optimization for approximating nash
  equilibria in normal-form games.
\newblock In \emph{Forty-second International Conference on Machine Learning},
  2025.

\bibitem[Meng et~al.(2026{\natexlab{a}})Meng, Yang, Zhang, Ge, and
  Gao]{meng2026faster}
Linjian Meng, Tianpei Yang, Youzhi Zhang, Zhenxing Ge, and Yang Gao.
\newblock Faster game solving via asymmetry of step sizes.
\newblock In \emph{Proceedings of the AAAI Conference on Artificial
  Intelligence}, volume~40, pages 17161--17169, 2026{\natexlab{a}}.

\bibitem[Meng et~al.(2026{\natexlab{b}})Meng, Zhang, Yang, Li, Ding, and
  Gao]{meng2026parameterfree}
Linjian Meng, Youzhi Zhang, Shangdong Yang, Wenbin Li, Tianyu Ding, and Yang
  Gao.
\newblock A faster parameter-free regret matching algorithm.
\newblock In \emph{The Fourteenth International Conference on Learning
  Representations}, 2026{\natexlab{b}}.

\bibitem[Mohamed et~al.(2020)Mohamed, Rosca, Figurnov, and
  Mnih]{mohamed2020monte}
Shakir Mohamed, Mihaela Rosca, Michael Figurnov, and Andriy Mnih.
\newblock Monte carlo gradient estimation in machine learning.
\newblock \emph{Journal of Machine Learning Research}, 21\penalty0
  (132):\penalty0 1--62, 2020.

\bibitem[Morav{\v{c}}{\'\i}k et~al.(2017)Morav{\v{c}}{\'\i}k, Schmid, Burch,
  Lis{\'y}, Morrill, Bard, Davis, Waugh, Johanson, and
  Bowling]{moravcik2017deepstack}
Matej Morav{\v{c}}{\'\i}k, Martin Schmid, Neil Burch, Viliam Lis{\'y}, Dustin
  Morrill, Nolan Bard, Trevor Davis, Kevin Waugh, Michael Johanson, and Michael
  Bowling.
\newblock {DeepStack}: Expert-level artificial intelligence in heads-up
  no-limit poker.
\newblock \emph{Science}, 356\penalty0 (6337):\penalty0 508--513, 2017.

\bibitem[Niederreiter(1992)]{niederreiter1992random}
Harald Niederreiter.
\newblock \emph{Random number generation and quasi-Monte Carlo methods}.
\newblock SIAM, 1992.

\bibitem[Owen(1997)]{owen1997monte}
Art~B Owen.
\newblock Monte carlo variance of scrambled net quadrature.
\newblock \emph{SIAM Journal on Numerical Analysis}, 34\penalty0 (5):\penalty0
  1884--1910, 1997.

\bibitem[Peshkin and Shelton(2002)]{peshkin2002learning}
Leonid Peshkin and Christian~R Shelton.
\newblock Learning from scarce experience.
\newblock In \emph{Proceedings of the Nineteenth International Conference on
  Machine Learning}, pages 498--505, 2002.

\bibitem[Qi et~al.(2024)Qi, Falin, Ting, Dengbing, Zhemei, and
  Yunfeng]{qi2024accelerating}
Ju~Qi, Hei Falin, Feng Ting, Yi~Dengbing, Fang Zhemei, and Luo Yunfeng.
\newblock Accelerating nash equilibrium convergence in monte carlo settings
  through counterfactual value based fictitious play.
\newblock \emph{Advances in Neural Information Processing Systems},
  37:\penalty0 108088--108115, 2024.

\bibitem[Schmid et~al.(2019)Schmid, Burch, Lanctot, Moravcik, Kadlec, and
  Bowling]{schmid2019variance}
Martin Schmid, Neil Burch, Marc Lanctot, Matej Moravcik, Rudolf Kadlec, and
  Michael Bowling.
\newblock Variance reduction in monte carlo counterfactual regret minimization
  (vr-mccfr) for extensive form games using baselines.
\newblock In \emph{Proceedings of the AAAI Conference on Artificial
  Intelligence}, volume~33, pages 2157--2164, 2019.

\bibitem[Schmid et~al.(2023)Schmid, Morav{\v{c}}{\'\i}k, Burch, Kadlec,
  Davidson, Waugh, Bard, Timbers, Lanctot, Holland, et~al.]{schmid2023student}
Martin Schmid, Matej Morav{\v{c}}{\'\i}k, Neil Burch, Rudolf Kadlec, Josh
  Davidson, Kevin Waugh, Nolan Bard, Finbarr Timbers, Marc Lanctot, G~Zacharias
  Holland, et~al.
\newblock Student of games: A unified learning algorithm for both perfect and
  imperfect information games.
\newblock \emph{Science Advances}, 9\penalty0 (46):\penalty0 eadg3256, 2023.

\bibitem[Sobol(1967)]{sobol1967distribution}
Ilya~M Sobol.
\newblock Distribution of points in a cube and approximate evaluation of
  integrals.
\newblock \emph{USSR Computational mathematics and mathematical physics},
  7:\penalty0 86--112, 1967.

\bibitem[Srinivasan et~al.(2018)Srinivasan, Lanctot, Zambaldi, P{\'e}rolat,
  Tuyls, Munos, and Bowling]{srinivasan2018actor}
Sriram Srinivasan, Marc Lanctot, Vin{\'i}cius~Flores Zambaldi, Julien
  P{\'e}rolat, Karl Tuyls, R{\'e}mi Munos, and Michael Bowling.
\newblock Actor-critic policy optimization in partially observable multiagent
  environments.
\newblock In \emph{Advances in Neural Information Processing Systems 31
  (NeurIPS 2018)}, pages 3426--3439, 2018.

\bibitem[Steinberger et~al.(2020)Steinberger, Lerer, and
  Brown]{steinberger2020dream}
Eric Steinberger, Adam Lerer, and Noam Brown.
\newblock Dream: Deep regret minimization with advantage baselines and
  model-free learning.
\newblock \emph{arXiv preprint arXiv:2006.10410}, 2020.

\bibitem[{\v{S}}ustr et~al.(2019){\v{S}}ustr, Kova{\v{r}}{\'\i}k, and
  Lis{\'y}]{sustr2019monte}
Michal {\v{S}}ustr, Vojt{\v{e}}ch Kova{\v{r}}{\'\i}k, and Viliam Lis{\'y}.
\newblock Monte carlo continual resolving for online strategy computation in
  imperfect information games.
\newblock In \emph{Proceedings of the 18th International Conference on
  Autonomous Agents and MultiAgent Systems}, pages 224--232, 2019.

\bibitem[{\v{S}}ustr et~al.(2020){\v{S}}ustr, Schmid, Morav{\v{c}}{\'\i}k,
  Burch, Lanctot, and Bowling]{sustr2020sound}
Michal {\v{S}}ustr, Martin Schmid, Matej Morav{\v{c}}{\'\i}k, Neil Burch, Marc
  Lanctot, and Michael Bowling.
\newblock Sound search in imperfect information games.
\newblock \emph{arXiv preprint arXiv:2006.08740}, 2020.

\bibitem[Sychrovsk{\'y} et~al.(2025)Sychrovsk{\'y}, Schmid, Sustr, and
  Bowling]{sychrovsky2025meta}
David Sychrovsk{\'y}, Martin Schmid, Michal Sustr, and Michael Bowling.
\newblock Meta-learning in self-play regret minimization.
\newblock \emph{arXiv preprint arXiv:2504.18917}, 2025.

\bibitem[Tammelin(2014)]{tammelin2014solving}
Oskari Tammelin.
\newblock Solving large imperfect information games using cfr+.
\newblock \emph{arXiv preprint arXiv:1407.5042}, 2014.

\bibitem[Waugh et~al.(2015)Waugh, Morrill, Bagnell, and
  Bowling]{waugh2015solving}
Kevin Waugh, Dustin Morrill, James Bagnell, and Michael Bowling.
\newblock Solving games with functional regret estimation.
\newblock In \emph{Proceedings of the AAAI Conference on Artificial
  Intelligence}, volume~29, 2015.

\bibitem[Xu et~al.(2024)Xu, Li, Fu, Fu, Xing, and Cheng]{xu2024dynamic}
Hang Xu, Kai Li, Haobo Fu, Qiang Fu, Junliang Xing, and Jian Cheng.
\newblock Dynamic discounted counterfactual regret minimization.
\newblock In \emph{The Twelfth International Conference on Learning
  Representations}, 2024.

\bibitem[Xu et~al.(2026)Xu, Li, Fu, Fu, Xing, and Cheng]{xu2026deep}
Hang Xu, Kai Li, Haobo Fu, Qiang Fu, Junliang Xing, and Jian Cheng.
\newblock Deep (predictive) discounted counterfactual regret minimization.
\newblock In \emph{Proceedings of the AAAI Conference on Artificial
  Intelligence}, volume~40, pages 17284--17292, 2026.

\bibitem[Zhang et~al.(2026)Zhang, McAleer, and Sandholm]{zhang2026faster}
Naifeng Zhang, Stephen~Marcus McAleer, and Tuomas Sandholm.
\newblock Faster game solving via hyperparameter schedules.
\newblock In \emph{Proceedings of the AAAI Conference on Artificial
  Intelligence}, volume~40, pages 17319--17326, 2026.

\bibitem[Zinkevich et~al.(2007)Zinkevich, Johanson, Bowling, and
  Piccione]{zinkevich2007regret}
Martin Zinkevich, Michael Johanson, Michael Bowling, and Carmelo Piccione.
\newblock Regret minimization in games with incomplete information.
\newblock \emph{Advances in neural information processing systems}, 20, 2007.

\end{thebibliography}

\appendix

\section{Scope of Claims}
\label{app:scope}

This appendix collects, in one place, the intended scope of the paper's claims.

\textbf{Empirical moderators.} Revisit exposure, outcome symmetry, and private-information coupling organize the measured results within this benchmark set. They are empirical correlates of the temporal-cancellation mechanism rather than established necessary or sufficient conditions, and Observations~\ref{obs:scope} and~\ref{obs:coupling} are descriptive summaries rather than theorem-level predictors. Revisit count is an exposure diagnostic: longer per-node streams make cross-visit cancellation possible without guaranteeing it, because the downstream weights are adaptive. Symmetric Goofspiel is revisited often and still gains less, so high revisit count alone is not sufficient; Kuhn places private chance only at the root and still gains substantially, so placement alone is not decisive either.

\textbf{Variance decomposition.} The split $\widetilde{\mathrm{Var}} \approx V_{\mathrm{chance}} + V_{\mathrm{opp}}$ in Section~\ref{sec:theory} is an interpretive approximation. It does not imply that CCS-MCCFR subtracts a nonnegative $V_{\mathrm{chance}}$ term. Under arbitrary adaptive bounded weights, the cumulative temporal covariance among chance contributions can vanish or take either sign, so neither the sign nor the magnitude of the effect is guaranteed by the decomposition.

\textbf{Boundary configurations.} Where the paper reports no statistically detectable final-endpoint difference, the statement means that the paired-bootstrap interval crosses zero at the stated budget and seed count. It is not a formal non-inferiority claim, and the underlying point estimates are not uniformly nonnegative: Liar's Dice is $-0.19\%$ at 200 seeds, the reduced Flop budget scan is $-2.81\%$, $-0.38\%$, and $+2.32\%$ at 8k, 15k, and 50k node touches, and Goofspiel-5 is $-0.8\%$ at 6 seeds. The paper does not attribute these boundary cases to a single cause.

\textbf{Game constructions.} The reduced Flop configuration is a targeted low-exposure diagnostic whose structure differs from full-scale Flop Hold'em, and its point estimates should not be read as claims about the latter. The four HUNL endgames are a protocol-mixed transfer diagnostic: exploitability is measured within each retained abstract endgame rather than in the full game, and Appendix~\ref{app:hunl-details} records the solver-revision scope.

\textbf{Composition.} The update-rule grid and the External-Sampling control-variate design establish empirical compatibility across the tested configurations. Neither supports an invariant or multiplicative composition law: the update rule changes the downstream weights paired with each phase and can therefore alter temporal covariance, and the measured reductions differ across cells. The control variate of Appendix~\ref{app:es-control-variate} is restricted to the opponent-action sample in External Sampling and is not the Outcome-Sampling instantiation of full VR-MCCFR.

\textbf{Theoretical reach.} Proposition~\ref{prop:biasbound} controls one bounded scalar contribution at one node and time through the conditional quantity $\delta_{c,t}$. This quantity is distinct from the frequency discrepancy of Theorem~\ref{thm:freq}, ordinary outcome-count logs need not identify it, and the inequality can be loose when $\delta_{c,t}$ is near one. Extending it across nodes, information sets, and adaptive iterations requires additional control of dependence and accumulation. Theorem~\ref{thm:freq} likewise controls local unweighted counts rather than adaptive weighted regret. Proposition~\ref{prop:traj} resolves the phase-selection term along fixed trajectories and Theorem~\ref{thm:reset} gives a per-traversal reset variant with the standard External Sampling guarantee, while global convergence of fully adaptive CCS-MCCFR remains open; Appendix~\ref{app:future} states the corresponding program.

\section{Future Work}
\label{app:future}

Two directions follow directly from the results. First, the controlled Leduc family shows that persistent balancing remains valuable as the chance-outcome space grows, while the HUNL endgames locate a lower-exposure regime at the same scale. Measuring cross-visit exposure and downstream coupling in larger games would turn these moderators into prospective predictors of where persistent streams retain enough temporal structure to help, and would extend the analysis to full-scale Flop Hold'em once a scalable best-response evaluator is available.

Second, Proposition~\ref{prop:biasbound} isolates adaptive phase selection through $\delta_{c,t}$. Establishing conditions that make this quantity decay, and controlling how local terms accumulate into global regret, would connect the per-node discrepancy result to a convergence guarantee for CCS-MCCFR. Randomized digital nets or lattice constructions may then extend the same approach to higher-dimensional chance spaces while preserving marginal correctness. The per-traversal reset variant already provides a standard-guarantee baseline for this theoretical program.

\section{Experimental Protocol Details}
\label{app:experiment-protocol}

The four principal games are Kuhn poker (root private-card chance), standard six-card Leduc poker (private cards and a public card between betting rounds), Goofspiel-4 (sequential symmetric public point-card reveals), and $1{\times}6$ Liar's Dice (root private dice followed by bidding). Vanilla sampling draws independently through the chance distribution, antithetic sampling pairs phases $u$ and $1-u$, and CCS-MCCFR uses the persistent randomized Weyl construction in Section~\ref{sec:method}. Exact exploitability is computed with OpenSpiel, and budgets count all decision and chance-node touches. The four principal experiments use 200 paired seeds; the controlled Leduc family uses 50, the update-rule grid 100, persistence ablations 10, reduced Flop 20, and Goofspiel-5 6. Raw exploitability ribbons are mean $\pm1.96$ SEM, whereas paired-reduction ribbons and significance decisions use 10{,}000-resample paired-bootstrap intervals on the per-seed absolute difference, normalized by the vanilla mean. Explicit $p$-values are complementary two-sided paired $t$-tests unless stated otherwise. The HUNL protocol and metric are detailed separately in Appendix~\ref{app:hunl-details}.

\section{Random Reset Guarantee}
\label{app:reset-proof}

\begin{theorem}[Convergence of the randomly reset variant]
\label{thm:reset}
Consider External Sampling MCCFR in a two player zero sum game where, at the start of every traversal, each concrete chance node $c$ independently resets its phase $\phi_c\sim\mathrm{Unif}[0,1)$, and chance outcomes within the traversal are drawn from the node's Weyl stream as in Section~\ref{sec:method}. Then the sampled traversals have the same joint distribution as under independent chance sampling; consequently, for each player $i$ and any $\rho\in(0,1]$, with probability at least $1-\rho$ the average regret after $T$ iterations satisfies
\[
\frac{R_i^T}{T} \;\le\; \Big(1+\frac{\sqrt{2}}{\sqrt{\rho}}\Big)\,
\frac{\Delta_{u,i}\,|\mathcal{I}_i|\,\sqrt{|A_i|}}{\sqrt{T}},
\]
the External Sampling bound of \citet{lanctot2009monte} (with their constant $M_i$ coarsened to $|\mathcal{I}_i|$), where $\Delta_{u,i}$ is the range of player $i$'s utilities, $\mathcal{I}_i$ its information sets, and $|A_i|=\max_{I\in\mathcal{I}_i}|A(I)|$. In particular the average profile converges to a Nash equilibrium at rate $O(1/\sqrt{T})$.
\end{theorem}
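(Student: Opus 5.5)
The plan is to reduce the reset variant to vanilla External Sampling by a distributional-equivalence argument, and then quote the existing bound of \citet{lanctot2009monte} unchanged. The key observation is that under per-traversal resetting, each concrete chance node is reached at most once per traversal (acyclic tree, as already used in the proof of Proposition~\ref{prop:traj}). Within a traversal, the node therefore consumes a single draw $u_{c,N_c}=(\phi_c+N_c g)\bmod 1$ with a freshly drawn phase $\phi_c$. I will condition on the entire history $\mathcal{F}_{t-1}$ of previous traversals (strategies, regrets, counters $N_c$, and past outcomes) together with all sampling already performed in the current traversal up to the moment $c$ is reached. The fresh phase $\phi_c$ is independent of this $\sigma$-algebra by construction, while $N_c$ is measurable with respect to it. By Proposition~\ref{prop:unbiased}, applied with the index fixed by conditioning, $u_{c,N_c}$ is then uniform and $\Phi_c^{-1}(u_{c,N_c})\sim f_c$. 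Moreover, this holds conditionally independently of everything sampled before it.

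The second step chains these conditional laws along the traversal order, both across chance nodes and interleaved with the opponent-action samples, which use the usual independent randomness. I will argue by induction on the sequence of sampling events that the conditional joint law of all samples in traversal $t$, given $\mathcal{F}_{t-1}$, equals the product law used by independent-chance External Sampling under $\sigma^t$. A conditional version of the argument at the end of Proposition~\ref{prop:traj} gives this directly. Since the regret estimator is the same measurable function of the sampled trajectory, it follows that $\mathbb{E}[\tilde r^t(I,a)\mid\mathcal{F}_{t-1}]=r^t(I,a)$. It also follows that the whole process $(\sigma^t,\tilde r^t)_{t\le T}$ has the same law as vanilla ES MCCFR, because the transition kernel from $\mathcal{F}_{t-1}$ to $\mathcal{F}_t$ coincides with the vanilla kernel. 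The counters $N_c$ are carried along but only enter through a uniformizing shift, so they do not affect the kernel.

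With identical process laws established, any high-probability statement about the regret of vanilla ES transfers verbatim to the reset variant. I will then invoke the External Sampling theorem of \citet{lanctot2009monte}, which gives $R_i^T/T\le(1+\sqrt2/\sqrt\rho)\,\Delta_{u,i}M_i\sqrt{|A_i|}/\sqrt T$ with probability at least $1-\rho$. The stated form follows by bounding $M_i\le|\mathcal{I}_i|$, since $M_i$ is a sum of square roots of information-set counts over partitions and is dominated by $|\mathcal{I}_i|$. The Nash convergence at rate $O(1/\sqrt T)$ then follows from the standard folk theorem: the average profile is a $(R_1^T+R_2^T)/T$-equilibrium in two-player zero-sum games. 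A union bound over the two players changes only constants.

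The main obstacle is the rigor of the second step. Specifically, $N_c$ and the event of reaching $c$ depend on the adaptive history, so I must make sure that nothing in the conditioning $\sigma$-algebra carries information about the current fresh $\phi_c$. This is exactly the phase-selection issue of Proposition~\ref{prop:biasbound}, and the reset removes it because $\delta_{c,t}=0$ in this construction. I will handle it by defining the filtration explicitly, with phases generated as an i.i.d. array $\{\phi_{c,t}\}$ independent across nodes and traversals. I will then verify that the reach event and $N_c(t)$ are measurable with respect to the array entries with index $<t$, plus the current-traversal samples at ancestor nodes only.
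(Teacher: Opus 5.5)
Your proposal is correct and follows the paper's proof. Both arguments condition on the past, use that each concrete chance node is consumed at most once per traversal with a fresh phase independent of everything that determines whether it is reached, conclude that the per-traversal sampling law coincides with independent-chance External Sampling, and then apply the Lanctot et al.\ bound verbatim. The only cosmetic differences are these: you let the counter $N_c$ persist (a measurable shift of a fresh uniform phase is still uniform) where the paper takes the consumed index to be $n=0$, and you spell out $M_i\le|\mathcal{I}_i|$ and the folk-theorem step that the paper leaves implicit.
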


\begin{proof}
Fix a traversal and condition on the entire past (all previous draws and regret updates) and on the current strategy profile. The freshly reset phases are independent of this conditioning by construction. In an acyclic tree each concrete chance node is consumed at most once during a single traversal, so a draw at $c$, if it occurs, uses index $n=0$ and phase $u_{c,0}=\phi_c$, which is uniform; by Proposition~\ref{prop:unbiased} the outcome has law $f_c$. Whether the traversal reaches $c$ is determined by the strategy profile and by draws at other nodes and at $c$'s ancestors, all independent of $\phi_c$; and phases at distinct nodes are independent. Hence, conditional on the past, the joint law of the traversal's sampled chance outcomes, together with the opponent's sampled actions (drawn exactly as in standard External Sampling), coincides with the joint law under independent chance sampling. The bound of \citet{lanctot2009monte} for External Sampling is a statement about precisely this per traversal sampling distribution, composed over iterations, so it applies verbatim; the standard argument converts the regret bound into $O(1/\sqrt{T})$ convergence of the average profile.
\end{proof}

\begin{table}[h]
\centering
\small
\caption{Persistence ablation using the per-iteration reset implementation (10 seeds, 30k iterations). Exploitability is mean $\pm 1.96$\,SEM over seeds. The reset cell is statistically indistinguishable from vanilla while the CCS-MCCFR cell is lower, making across-iteration phase dependence a plausible contributor to the observed effect. Theorem~\ref{thm:reset} covers a distinct per traversal reset variant that was not separately measured in this table.}
\label{tab:persistence}
\begin{tabular}{lccc}
\toprule
Variant & Kuhn & Leduc & vs.\ vanilla \\
\midrule
Vanilla              & $0.00503\,{\pm}\,0.00116$ & $0.13889\,{\pm}\,0.00903$ & N/A \\
CCS-MCCFR (persistent) & $0.00362\,{\pm}\,0.00063$ & $0.10909\,{\pm}\,0.00353$ & \textbf{sig.} \\
Randomly reset        & $0.00554\,{\pm}\,0.00069$ & $0.14652\,{\pm}\,0.00637$ & no detected diff. \\
\bottomrule
\multicolumn{4}{l}{\footnotesize Persist vs.\ random reset (two-sided paired $t$-test): Kuhn +34.7\% ($p{=}.0043$), Leduc +25.5\% ($p{<}.0001$).}
\end{tabular}
\end{table}

\textbf{Implementation granularity.} Our implementation resets once per CFR iteration (covering both players' traversals) rather than once per traversal. The two coincide unless both traversals of the same iteration consume the same concrete node; in that case, under alternating regret updates, the second traversal's visit indicator can in principle covary with $\phi_c$ through the intervening update, a single step, within iteration instance of the phase selection effect of Proposition~\ref{prop:biasbound}. Resetting per traversal removes even this and is the variant covered by Theorem~\ref{thm:reset}; the measured per iteration reset cell is statistically indistinguishable from vanilla in Table~\ref{tab:persistence}.

\textbf{Persistence ablation and diagnostics.} The per-iteration reset implementation remains near vanilla in this ablation, while CCS-MCCFR improves on both, supporting across-iteration phase dependence as a contributor to the observed effect. This ablation does not identify the entire covariance pathway. For CCS-MCCFR, Proposition~\ref{prop:biasbound} identifies the relevant local conditional scalar term; Appendix~\ref{app:adaptive-frequency} reports complementary adaptive-frequency and large-budget diagnostics.

\section{Revisit Predictor Measurement}
\label{app:revisit-protocol}

We measure the revisit diagnostic by running vanilla MCCFR for 1000 iterations on each game and recording visits to the concrete chance keys actually observed. Table~\ref{tab:revisit} reports the median count and the fraction of observed keys visited exactly once. The four HUNL rows use the same 1000-iteration, seed-0 i.i.d. diagnostic on each fixed endgame; their key identity is the exact key passed to the C++ chance sampler, including root deals and, for Turn endgames, concrete river-card chance leaves. The high-revisit tabular poker cases expose longer persistent streams, whereas reduced Flop and HUNL probe lower-exposure transfer regimes. Revisit exposure is interpreted jointly with downstream weights and symmetry.

\begin{figure}[t]
\centering
\begin{minipage}[b]{0.52\textwidth}
\centering
\footnotesize
\setlength{\tabcolsep}{3.5pt}
\begin{tabular}{lccc}
\toprule
Game & Med.\ visits & \% once & Gain \\
\midrule
Kuhn            & 678 & 0\%  & \textbf{+27.6\%} \\
Leduc           & 27  & 0\%  & \textbf{+24.6\%} \\
Leduc 10 card   & 21  & 0\%  & \textbf{+34.0\%} \\
Leduc 12 card   & 15  & 0\%  & \textbf{+19.0\%} \\
Flop Hold'em (red.) & 1 & 75\% & $-2.8\%$ \\
HUNL SG1 (T) & 4 & 13\% & $+0.05\%$ \\
HUNL SG2 (T) & 4 & 23\% & $+0.00\%$ \\
HUNL SG3 (R) & 3 & 18\% & $+0.48\%$ \\
HUNL SG4 (R) & 10 & 14\% & $+0.74\%$ \\
\bottomrule
\end{tabular}
\captionof{table}{Per-chance-node revisit statistics from 1000-iteration vanilla diagnostics versus measured CCS-MCCFR gain. Statistics use observed concrete nodes, and \% once is the fraction of those nodes visited exactly once. The reduced Flop row shows the 8k endpoint; a 20-seed three-budget scan finds no detectable difference at 8k, 15k, or 50k. HUNL rows use separately instrumented fixed-endgame diagnostics.}
\label{tab:revisit}
\end{minipage}\hfill
\begin{minipage}[b]{0.45\textwidth}
\centering
\includegraphics[width=\textwidth]{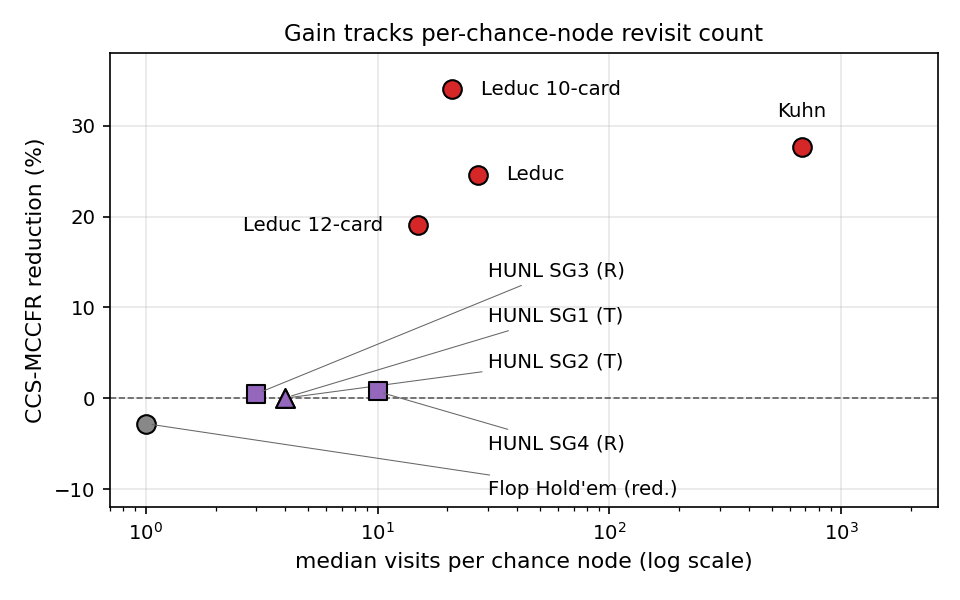}
\captionof{figure}{CCS-MCCFR gain vs.\ median visits per observed concrete chance node. The high-revisit tabular poker configurations show the largest gains; the low-exposure reduced Flop and four HUNL endgames localize regimes with no statistically detectable difference. HUNL triangles are Turn and squares are River, and Subgames~1 and~2 coincide at four median visits, so their leader lines share one marker. Per-node \% once values are listed in Table~\ref{tab:revisit}. This is a descriptive moderator visualization, not a causal or necessary-condition test.}
\label{fig:revisit}
\end{minipage}
\end{figure}

In this benchmark set, the larger gains coincide with chance nodes revisited tens to hundreds of times and with private-information coupling. The reduced Flop construction has a median of one visit and 75\% of observed nodes visited exactly once, making it a targeted low-exposure diagnostic. Across 20 paired seeds, its reductions are $-2.81\%$, $-0.38\%$, and $+2.32\%$ at 8k, 15k, and 50k node touches, respectively, and all three paired-bootstrap intervals cross zero. The budget scan therefore shows no statistically detectable deterioration or improvement rather than a stable negative effect. The construction differs structurally from full-scale Flop Hold'em and should not be read as a claim about the latter. Goofspiel is revisited often yet gains less, highlighting symmetry as a second moderator. Together these comparisons motivate a joint revisit/coupling diagnostic rather than a single-factor transfer rule.

\section{Composition Details}
\label{app:composition-details}

We test Discounted CFR and Linear CFR \citep{brown2019solving}, which time discount cumulative regret (positive regret $\times\, t^\alpha/(t^\alpha{+}1)$, negative $\times\, t^\beta/(t^\beta{+}1)$) and reach weight the average strategy ($\times\, t^\gamma$). DCFR uses $(\alpha,\beta,\gamma){=}(1.5,0,2)$ and LCFR uses $(1,1,1)$. We implement them as a solver subclass whose discount sweep touches only accumulated regrets and never the chance draw, and verify convergence against OpenSpiel's deterministic \texttt{DCFRSolver}.

In our Leduc budget sweep, the measured differences among the tested update rules narrow near $1.5\!\times\!10^6$ node touches, so we select a pre-convergence budget where the update-rule axis remains active. With i.i.d.\ chance, LCFR improves from $+12.6\%$ at $5\!\times\!10^4$ touches to $+21.5\%$ at $3\!\times\!10^5$, with all paired-bootstrap intervals clearing zero; its measured advantage narrows near $1.5\!\times\!10^6$. DCFR is harmful over this tested regime. We therefore evaluate composition at $3\!\times\!10^5$ touches.

Table~\ref{tab:orthogonal} gives the $2\times3$ Leduc grid. LCFR alone cuts exploitability by $24.2\%$ ($p{<}10^{-4}$); CCS-MCCFR cuts $19.4\%$ with vanilla updates, $24.3\%$ with LCFR, and $17.5\%$ with DCFR, with all intervals clearing zero and 98 to 100 of 100 paired seeds favoring CCS-MCCFR. LCFR plus CCS-MCCFR reaches $0.227$, a $42.59\%$ reduction from the vanilla i.i.d.\ value $0.395$.

On Kuhn ($8\!\times\!10^5$ touches, 100 seeds), neither LCFR nor DCFR accelerates vanilla at the measured endpoint, while CCS-MCCFR reduces exploitability by $+26.4\%$ with vanilla, $+24.2\%$ with LCFR, and $+18.5\%$ with DCFR, all with intervals clearing zero. Together with the Leduc grid, this shows empirical compatibility across these configurations; the differing reductions also caution against an invariant composition interpretation.

\paragraph{Composition with an External-Sampling control variate.}
\label{app:es-control-variate}

We additionally evaluate a $2\times2$ design crossing i.i.d.\ chance versus CCS-MCCFR with a control variate on the opponent action sampled by External Sampling. At an opponent information state, after sampling $a^*\sim\sigma$, the corrected node value is $\sum_a\sigma(a)b(a)+v(a^*)-b(a^*)$, where each $b(a)$ is the running mean of previously sampled continuation values for that action. This estimate is unbiased for $\sum_a\sigma(a)v(a)$ for any fixed baseline table, and disabling the baseline recovers the upstream External-Sampling recursion. The construction is motivated by the control-variate principle of VR-MCCFR \citep{schmid2019variance}, but it is restricted to the opponent-action sample in External Sampling: it is not the full recursive VR-MCCFR estimator instantiated with Outcome Sampling, which corrects every sampled value with importance weights.

Table~\ref{tab:es-control-variate} reports final endpoints from 20 paired seeds at 500k node touches for Kuhn and 1.5M for Leduc. Every entry is the reduction from the row's first condition to its second, computed by the paper's 10{,}000-resample paired bootstrap on per-seed absolute exploitability differences and normalized by the first condition's mean. With the control variate enabled, Weyl sampling reduces exploitability by 28.69\% on Kuhn and 25.64\% on Leduc, with both intervals above zero. In contrast, enabling this restricted control variate under either chance sampler yields intervals that cross zero. The experiment therefore shows that the chance-sampling effect survives this particular estimator-side modification; it neither establishes an additive interaction nor compares against the full recursive VR-MCCFR estimator instantiated with Outcome Sampling.

\begin{table}[h]
\centering
\small
\setlength{\tabcolsep}{4pt}
\caption{External-Sampling $2\times2$ composition diagnostic (20 paired seeds). Reductions and 95\% CIs use paired absolute endpoint differences normalized by the first condition's mean; wins count seeds in which the second condition has lower exploitability. Each comparison label reads first$\,\to\,$second, matching the reduction direction. ``CV'' is the restricted opponent-action control variate above, not the Outcome-Sampling instantiation of full VR-MCCFR.}
\label{tab:es-control-variate}
\begin{tabular}{llrr}
\toprule
Game & Comparison & Reduction [95\% CI] & Wins \\
\midrule
Kuhn & i.i.d.\ $\to$ Weyl, CV off & $+31.11\%\ [11.61,51.35]$ & 14/20 \\
Kuhn & i.i.d.\ $\to$ Weyl, CV on  & $+28.69\%\ [13.85,45.03]$ & 15/20 \\
Kuhn & CV off $\to$ on, i.i.d.\   & $+3.70\%\ [-16.02,23.54]$ & 9/20 \\
Kuhn & CV off $\to$ on, Weyl      & $+0.31\%\ [-24.10,24.21]$ & 11/20 \\
\midrule
Leduc & i.i.d.\ $\to$ Weyl, CV off & $+22.38\%\ [18.68,26.11]$ & 20/20 \\
Leduc & i.i.d.\ $\to$ Weyl, CV on  & $+25.64\%\ [23.24,27.98]$ & 20/20 \\
Leduc & CV off $\to$ on, i.i.d.\   & $-3.58\%\ [-8.32,1.03]$ & 9/20 \\
Leduc & CV off $\to$ on, Weyl      & $+0.78\%\ [-3.19,4.47]$ & 13/20 \\
\bottomrule
\end{tabular}
\end{table}

\section{HUNL Transfer Protocol and Results}
\label{app:hunl-details}

The HUNL diagnostic re-solves four isolated Libratus boards and ranges with a standalone C++ External-Sampling MCCFR solver. Subgames~1--2 begin on the Turn and include both root private-card deals and later river-card chance; Subgames~3--4 begin on the River and contain only root private-card chance. There is no hand abstraction, but the action space remains fixed and abstracted, so reported exploitability is within the retained abstract endgame rather than the full HUNL game. Each endgame uses $2\times10^6$ iterations and 25 paired seeds. The archive spans two solver-protocol revisions. Subgame~1 contains 6 fully-legacy pairs (seeds 0--5), 18 fully \texttt{hunl-update-v2} pairs (seeds 7--24), and one cross-protocol pair (seed 6, i.i.d.\ legacy versus Weyl v2); Subgames~2--4 use the legacy revision throughout. The v2 records explicitly expose fixed-action/no-hand-abstraction and current-visit averaging semantics. Because of this heterogeneity, and the single cross-protocol pair in Subgame~1, we interpret each endgame separately and treat the four-endgame set as a protocol-mixed transfer diagnostic rather than a homogeneous scaling estimate.

\begin{table}[h]
\centering
\small
\caption{Four HUNL endgames, direct i.i.d.\ versus CCS-MCCFR. Exploitability within the retained fixed-action abstract endgame is mean $\pm1.96$ SEM in mbb/g; reductions are paired point estimates. Every 10{,}000-resample paired-bootstrap interval crosses zero.}
\label{tab:hunl-all}
\setlength{\tabcolsep}{4pt}
\begin{tabular}{lcccc}
\toprule
Endgame & Round & i.i.d. & CCS-MCCFR & Reduction \\
\midrule
Subgame 1 & Turn  & $32427\,{\pm}\,38$ & $32411\,{\pm}\,44$ & $+0.05\%$ \\
Subgame 2 & Turn  & $39289\,{\pm}\,57$ & $39288\,{\pm}\,53$ & $+0.00\%$ \\
Subgame 3 & River & $862.4\,{\pm}\,5.5$ & $858.2\,{\pm}\,5.0$ & $+0.48\%$ \\
Subgame 4 & River & $1196.2\,{\pm}\,8.8$ & $1187.4\,{\pm}\,8.1$ & $+0.74\%$ \\
\bottomrule
\end{tabular}
\end{table}

\begin{figure}[t]
\centering
\includegraphics[width=0.9\textwidth]{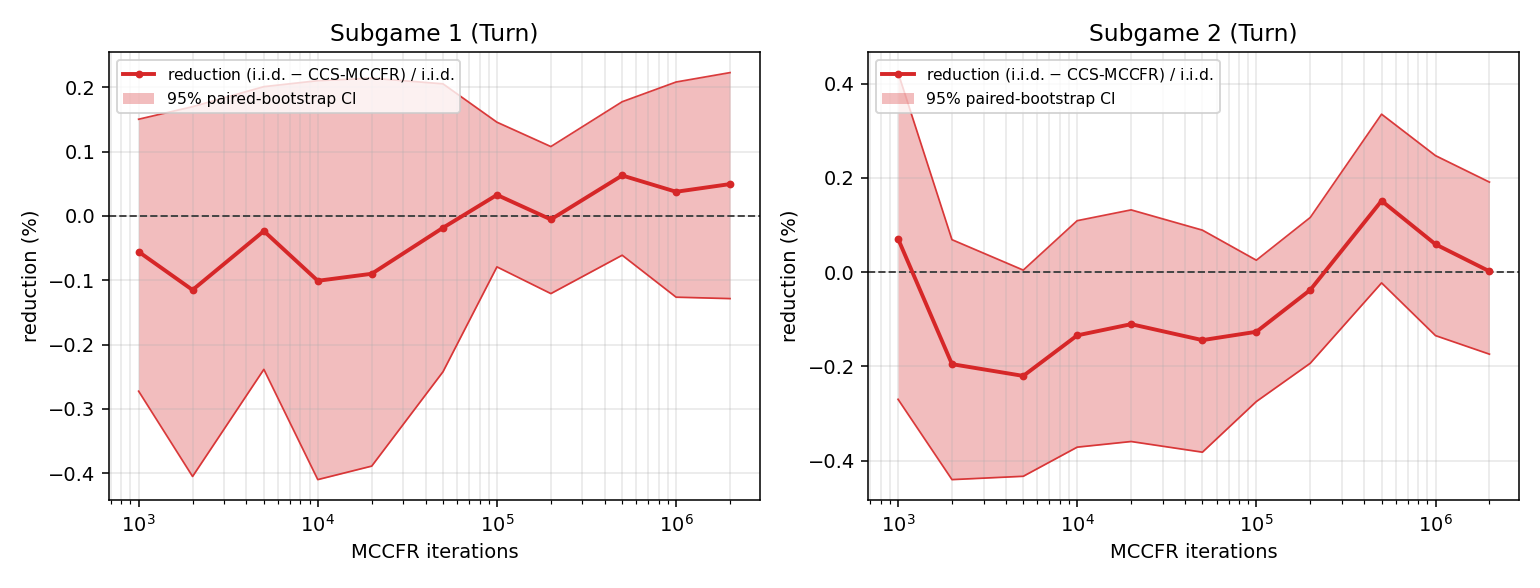}
\caption{Pointwise paired reduction on the two Turn endgames, with 95\% paired-bootstrap intervals and a zero reference line ($n{=}25$). Both bands bracket zero throughout training.}
\label{fig:hunl}
\end{figure}

\begin{figure}[t]
\centering
\includegraphics[width=0.9\textwidth]{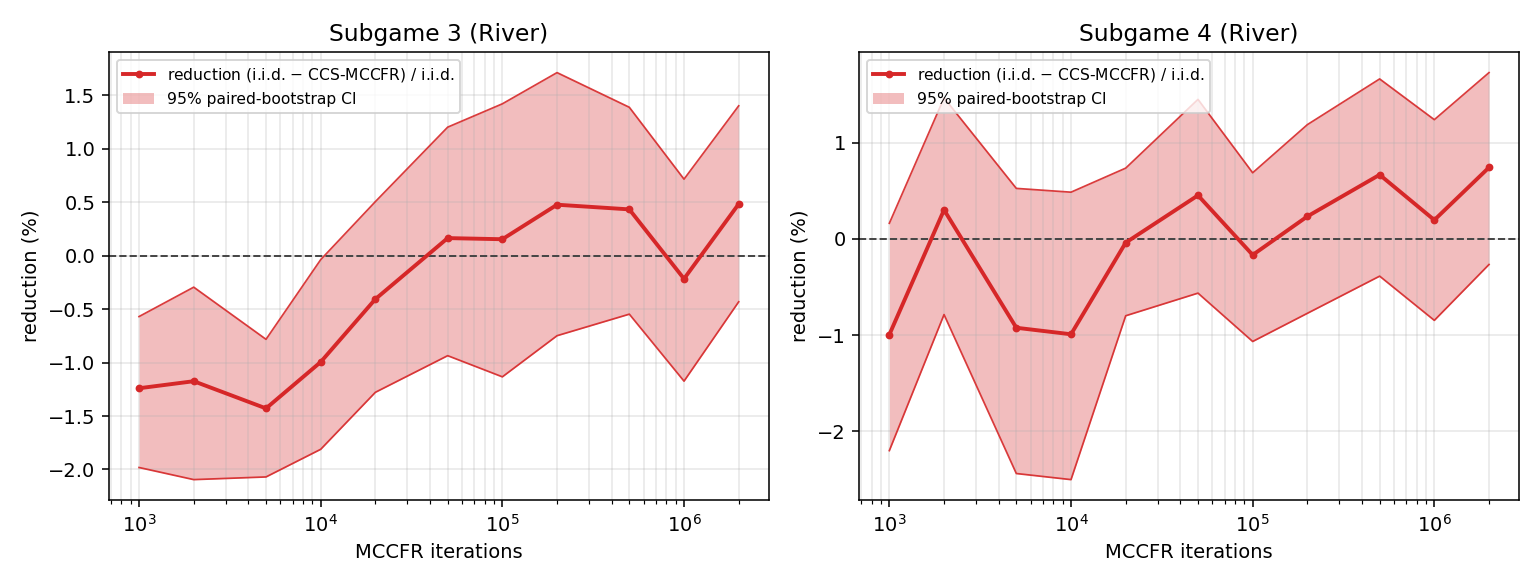}
\caption{Pointwise paired reduction on the two River endgames ($n{=}25$). Subgame~3 briefly favors i.i.d.\ at four early checkpoints before its intervals return to crossing zero; Subgame~4 brackets zero throughout, and both final intervals cross zero.}
\label{fig:hunl-river}
\end{figure}

\section{Additional Experiments and Robustness Checks}
\label{app:extra}

This appendix collects supporting results referenced in the main text: a larger symmetric-chance benchmark (Goofspiel-5), wall-clock timing, and large-budget persistence to 3M node touches. The final figures visualize the composition results detailed in Appendix~\ref{app:composition-details}.

\paragraph{Adaptive Chance Frequency Diagnostic}
\label{app:adaptive-frequency}

We measure chance frequency fidelity during the fully adaptive CCS-MCCFR run, rather than along a fixed strategy trajectory. For each concrete chance node, we record every selected outcome and the chance probability vector observed on its first visit. After 30{,}000 MCCFR iterations with seed 0, separately for Kuhn and Leduc, we compute the maximum absolute difference between empirical and true outcome probabilities at each node. Nodes with fewer than 30 visits are excluded, and the primary statistic is the visit weighted mean of these per node maximum differences. All 4 Kuhn nodes and all 157 Leduc nodes pass the visit threshold. The visit weighted mean maximum difference is $3\times10^{-5}$ for Kuhn and $6\times10^{-4}$ for Leduc; the worst single node across both games is $1.5\times10^{-2}$ in Leduc. These are descriptive diagnostics from one fully adaptive run per game, not an unbiasedness test or a confidence bound. The complementary no floor check uses 10 seeds and budgets up to 1.5M node touches for both games, together with the separate 3M touch Leduc result below.

\textbf{Goofspiel-5 (larger symmetric-chance diagnostic).} We also ran Goofspiel with 5 cards ($\approx 5\times$ the state space of Goofspiel-4), a canonical chance-heavy game used across three generations of variance-reduction work \cite{lanctot2009monte,gibson2012generalized,farina2020stochastic}. At 1M node touches (6 paired seeds), CCS-MCCFR yields $0.6465 \pm 0.012$ versus vanilla $0.6412 \pm 0.007$, a $-0.8\%$ point estimate (two-sided paired $t$-test $p = 0.37$). Given the small sample, this is a descriptive boundary diagnostic. It is consistent with, but does not establish, the hypothesis that symmetric public chance leaves less residual chance variation for temporal balancing to affect in regret differences.

\textbf{Wall clock check.} CCS-MCCFR computes $u_n = (\phi_c + ng) \mod 1$ and an inverse CDF lookup per chance-node visit. On Leduc at 1.5M node touches (5 paired seeds, single core), mean $\pm$ SD solver time is $25.47 \pm 0.19$\,s for vanilla and $24.66 \pm 1.84$\,s for CCS-MCCFR, with mean ratio $0.968$ (two-sided paired $t$-test $p = 0.42$). The check detects no timing difference and is intended as a lightweight implementation diagnostic rather than an equivalence test.

\textbf{Large-budget persistence.} At 3M Leduc node touches (10 paired seeds), vanilla reaches $0.09026$ exploitability and CCS-MCCFR $0.06813$, a 24.5\% reduction (two-sided paired $t$-test $p = 1.9\times10^{-6}$). This closely matches the 24.59\% main Leduc result and shows that the measured advantage persists at twice the principal budget.

Figures~\ref{fig:dcfr-accel} and~\ref{fig:stack-grid} visualize the budget sweep and composition grid discussed in Appendix~\ref{app:composition-details}.

\begin{figure}[h]
\centering
\includegraphics[width=0.7\textwidth]{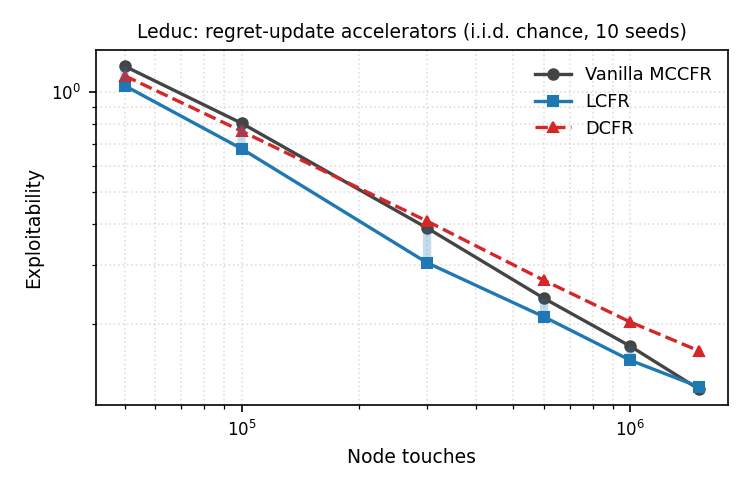}
\caption{MCCFR regret update accelerators vs.\ node touch budget on Leduc (i.i.d.\ chance, 10 seeds). LCFR is significantly faster than vanilla through $\sim\!10^6$ touches (shaded gap); DCFR's $\beta{=}0$ discount is harmful under Monte Carlo regret noise. The update rule axis is active only before convergence.}
\label{fig:dcfr-accel}
\end{figure}

\begin{figure}[h]
\centering
\includegraphics[width=0.6\textwidth]{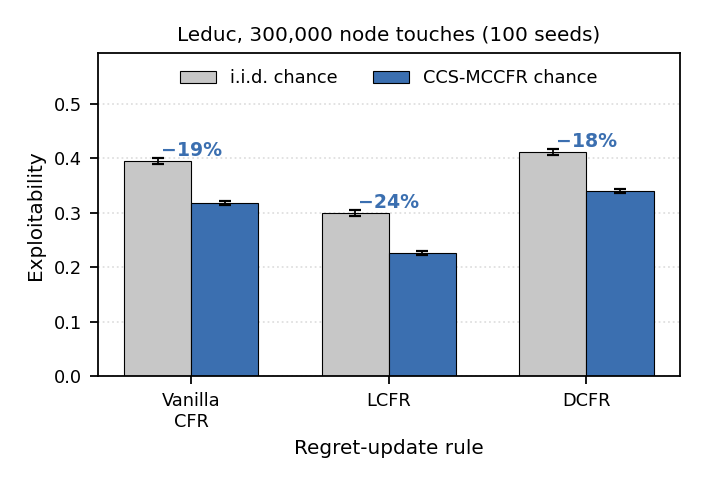}
\caption{Leduc at $3\times10^5$ node touches (100 seeds). Within each tested update rule, CCS-MCCFR (blue) lowers exploitability relative to i.i.d.\ chance (grey); across rules, LCFR has the lowest i.i.d.\ exploitability. LCFR plus CCS-MCCFR is the lowest measured cell. The differing relative reductions across rules indicate empirical compatibility rather than an additive or invariant composition law. Error bars are paired bootstrap 95\% CIs.}
\label{fig:stack-grid}
\end{figure}

\end{document}